\renewcommand\footnotetextcopyrightpermission[1]{} 
\renewcommand\@formatdoi[1]{\ignorespaces}
\setlist[enumerate]{noitemsep,topsep=2pt,leftmargin=*}
\setlist[itemize]{noitemsep,topsep=2pt,leftmargin=*}
\newtheorem{mydef}{Definition}
\definecolor{Gray}{RGB}{33, 33, 33}
\definecolor{lightlGray}{RGB}{237, 237, 237}
\newcommand{\code}[1]{\texttt{\small #1}}
\definecolor{mpcolor}{rgb}{0.1,0.9,0.1}
\definecolor{dscolor}{rgb}{0.1,0.1,0.9}
\definecolor{cascolor}{rgb}{1,0.75,0}
\newcommand*\mysize{%
  \@setfontsize\mysize{7.7}{9.0}%
}
\setlist[enumerate]{noitemsep,topsep=2pt,leftmargin=*}
\setlist[itemize]{noitemsep,topsep=2pt,leftmargin=*}
\renewcommand{\paragraph}{%
  \@startsection{paragraph}{4}%
  {\z@}{0.5ex \@plus 0.5ex \@minus .2ex}{-1em}%
  {\normalfont\normalsize\bfseries}%
}
\newenvironment{itemize*}%
{\begin{itemize}}%
  {\end{itemize}}
\newcommand{\node}{Node.js}
\newcounter{eventIdCtr}
\newcommand{\numberapps}{56}
\newcommand{\benchmarksWithHidden}{Five}
\newcommand{\puVsNsuPermissiveness}{5.46}
\newcommand{\dropInLCNIOS}{0.3\%}
\newcommand{\dropInLCOSES}{45.4\%}
\newcommand{\upgrade}{\mathit{upgrade}}
\newcommand{\Obj}{\mathit{Obj}}
\newcommand{\Base}{\mathit{Base}}
\newcommand{\eFlow}{\textbf{E}}
\newcommand{\oFlow}{\textbf{O}}
\newcommand{\zeroFlow}{\mathbf{0}}
\newcommand{\Addr}{\mathit{Addr}}
\newcommand{\Cnt}{\mathit{Cnt}}
\newcommand{\SBC}{\mathit{SBC}}
\newcommand{\LCR}{\mathit{LCR}}
\newcommand{\Conf}{\mathit{Conf}}
\newcommand{\Cxt}{\mathit{Cxt}}
\newcommand{\pemph}[1]{\textbf{#1}}
\newcommand{\Stmt}{\mathit{Stmt}}
\newcommand{\psink}[1]{\pemph{sink}(#1)}
\newcommand{\pskip}{\pemph{skip}}
\newcommand{\terminated}{\varepsilon}
\newcommand{\emptyTrace}{\varepsilon}
\newcommand{\ptrue}{\pemph{tt}}
\newcommand{\pfalse}{\pemph{ff}}
\newcommand{\passign}[2]{#1 = #2}
\newcommand{\passignArr}[3]{#1[#2] = #3}
\newcommand{\passignField}[3]{#1.#2 = #3}
\newcommand{\dom}{\text{dom}}
\newcommand{\ppop}{\textbf{pop}}
\newcommand{\pif}[3]{\pemph{if}\ (#1)\ \{\ #2\ \}\ \pemph{else}\ \{\ #3\ \} }
\newcommand{\pwhile}[2]{\pemph{while}\ #1\ \pemph{do}\ #2}
\newcommand{\pseq}[2]{#1 \mathrel{;} #2}
\newcommand{\pupgrade}[1]{\pemph{upgrade}(#1)}
\newcommand{\High}{\textbf{H}}
\newcommand{\Name}{\mathit{Name}}
\mathchardef\breakingcomma\mathcode`\,
\newcommand{\mathlist}[1]{\mathcode`\,=\string"8000 #1}
\newcommand{\conf}[1]{\langle #1 \rangle}
\newcommand{\Label}{\mathcal{L}}
\newcommand{\Value}{\mathit{Value}}
\newcommand{\Trace}{\mathit{Tr}}
\newcommand{\Expr}{\mathit{Expr}}
\newcommand{\expeval}[2]{\llbracket #1 \rrbracket (#2)}
\newcommand{\env}{\rho}
\newcommand{\cnt}{\kappa}
\newcommand{\cnf}{\mathit{cnf}}
\newcommand{\expl}{\mathit{expl}}
\newcommand{\obs}{\mathit{obs}}
\newcommand{\step}[1]{\xrightarrow{#1}}
\newcommand{\flowsto}{\sqsubseteq}
\newcommand{\cxtInsert}[2]{#1[#2]}
\newcommand{\leaveBranch}[1]{\mathit{leaveBranch}(#1)}
\newcommand{\length}[1]{\mathit{length}(#1)}
\newcommand{\toVal}{\mathit{toVal}}
\begin{document}
%
%
%
\title{An Empirical Study of Information Flows in Real-World JavaScript}

\author{Cristian-Alexandru Staicu}
\affiliation{TU Darmstadt}
\author{Daniel Schoepe}
\affiliation{Chalmers University of Technology}
\author{Musard Balliu}
\affiliation{KTH Royal Institute of Technology}
\author{Michael Pradel}
\affiliation{TU Darmstadt}
\author{Andrei Sabelfeld}
\affiliation{Chalmers University of Technology}


%
%
%

\lstdefinelanguage{JavaScript}{
  keywords={typeof, new, true, false, catch, function, return, null, catch, switch, var, if, in, while, do, else, case, break},
  keywordstyle=\color{darkgray}\bfseries,
  ndkeywords={class, export, boolean, throw, implements, import, this},
  ndkeywordstyle=\color{darkgray}\bfseries,
  identifierstyle=\color{black},
  sensitive=false,
  comment=[l]{//},
  morecomment=[s]{/*}{*/},
  commentstyle=\color{darkgray}\ttfamily,
  stringstyle=\color{gray}\ttfamily,
  morestring=[b]',
  morestring=[b]"
}

\lstset{
   language=JavaScript,
   extendedchars=true,
   basicstyle=\mysize\ttfamily,
   showstringspaces=false,
   showspaces=false,
   numbers=left,
   numberstyle=\scriptsize,
   numbersep=9pt,
   tabsize=2,
   breaklines=true,
   showtabs=false,
   captionpos=b,
   framesep=4.5mm,
   framexleftmargin=2.5mm,
   escapeinside={/*\#}{\#*/},
   xleftmargin=2em
}

\makeatletter
\newcommand\HL{%
   \gdef\lst@alloverstyle##1{%
     \textcolor{red}{##1}
   }%
}

\newcommand\HLoff{%
   \xdef\lst@alloverstyle##1{##1}%
}

\newcommand\HLL{%
   \gdef\lst@alloverstyle##1{%
     \textcolor{blue}{##1}
   }%
}

\newcommand\HLLoff{%
   \xdef\lst@alloverstyle##1{##1}%
}
\setcopyright{none}

\begin{abstract}
Information flow analysis 
prevents secret or untrusted data from flowing into public or trusted 
sinks.
Existing mechanisms cover a wide array of options, ranging from lightweight 
taint analysis to heavyweight information flow control that also considers 
implicit flows.
Dynamic analysis, which is particularly popular for languages such as 
JavaScript, faces the question whether to invest in analyzing flows 
caused by not executing a particular branch, so-called hidden implicit 
flows.
This paper addresses the questions how common different kinds of flows are 
in real-world programs, how important these flows are to enforce security 
policies, and how costly it is to consider these flows.
We address these questions in an empirical study that analyzes \numberapps{} 
real-world JavaScript programs that suffer from various security problems, 
such as code injection vulnerabilities, denial of service vulnerabilities,
memory leaks, and privacy leaks.
The study is based on a state-of-the-art dynamic information flow analysis 
and a formalization of its core.
We find that implicit flows are expensive to track in terms of 
permissiveness, label creep, and runtime overhead.
We find a lightweight taint analysis to be sufficient for most of 
the studied security problems, while for some privacy-related code, 
observable tracking is sometimes required.
In contrast, we do not find any evidence that tracking hidden implicit flows 
reveals otherwise missed security problems.
Our results help security analysts and analysis designers to understand the 
cost-benefit tradeoffs of information flow analysis and provide empirical 
evidence that analyzing implicit flows in a cost-effective way is a relevant 
problem.

\end{abstract}

%
%
%
%
%

\maketitle
\section{Introduction}
\label{sec:intro}

JavaScript is at the heart of the modern web, empowering rich client-side 
applications and, more recently, also server-side applications.
While some language features, such as dynamism and flexibility, 
explain this popularity, the lack of other features, such as language-level 
protection and isolation mechanisms, open up a
wide range of integrity, availability, and confidentiality vulnerabilities~\cite{DBLP:journals/virology/Johns08}.
As a result, securing JavaScript applications has become
a key challenge for web application security.
Unfortunately, existing browser-level mechanisms, such as the same-origin 
policy or the content security policy, are
coarse-grained, falling short to distinguish between secure and
insecure manipulation of data by scripts.
Furthermore, 
server-side applications lack such isolation mechanisms completely, 
allowing an attacker, e.g., to inject and execute arbitrary code that 
interacts with the operating system through powerful 
APIs~\cite{ndss2018}.

An appealing approach to securing JavaScript applications is information 
flow analysis.
This approach tracks the flow of information from sources to sinks in
order to 
enforce application-level security policies.
It can ensure both integrity, by preventing information from untrusted 
sources to reach trusted sinks, and confidentiality, by preventing 
information from secret sources to reach public sinks.
For example, information flow analysis can check that no attacker-controlled 
data is evaluated as executable code or that secret user data is not sent to 
the network.
Because the dynamic nature of JavaScript hinders precise static analysis, 
dynamic information flow analysis has received significant attention by 
researchers~\cite{DBLP:conf/ndss/VogtNJKKV07,Jang2010,Chudnov:2015:IIF:2810103.2813684,hedin2014jsflow,DBLP:conf/esorics/BichhawatRJGH17,bauer2015run,DeGroef:2012:FWB:2382196.2382275,Austin:2012:MFD:2103656.2103677}.
The basic idea of dynamic information flow analysis is to attach security 
labels, e.g., secret (untrusted) and public (trusted), to runtime values and 
to propagate these labels during program execution.
To simplify the presentation, we assume to have two security labels, and we 
say that a value is \emph{sensitive} if its label is secret or untrusted;
otherwise, we say that a value is \emph{insensitive}.

At the language level, a program may propagate information via two kinds of 
information flows:\footnote{There are other kinds of flows, such as timing 
and cache side-channels, which we ignore here.}
\emph{Explicit flows}~\cite{denning1977certification} occur whenever sensitive information is passed
by an assignment statement or into a sink.
\emph{Implicit flows}~\cite{denning1977certification} arise via control-flow structures of programs, e.g., 
conditionals and loops, when the flow of control depends on a sensitive value.
For a dynamic information flow analysis, implicit flows can be further 
classified into flows that happen because a particular branch is executed, 
so-called \emph{observable implicit flows}~\cite{DBLP:conf/esorics/BalliuSS17}, and flows that happen
because a particular branch is not executed, so called \emph{hidden 
implicit flows}~\cite{DBLP:conf/esorics/BalliuSS17}.



\begin{figure}
\begin{spacing}{0.9}
\begin{lstlisting}
// variable /*#\HL#*/passwd/*#\HLoff#*//*#\label{line:markSource}#*/is sensitive 
var gotIt = false;
var paddedPasswd = "xx" + passwd;/*#\label{line:explicit}#*/
var knownPasswd = null;
if (paddedPasswd === "xxtopSecret") {/*#\label{line:cond}#*/
  gotIt = true;/*#\label{line:write1}#*/
  knownPasswd = passwd;/*#\label{line:write2}#*/
}/*#\label{line:merge}#*/
// function /*#\HLL#*/sink/*#\HLLoff#*/is insensitive
sink(gotIt);/*#\label{line:sink}#*/
\end{lstlisting}
\end{spacing}
\caption{Program leaking the password to the network.} 
\label{fig:example}
\end{figure}

Figure~\ref{fig:example} illustrates the different kinds of flows with a 
simple JavaScript-like program that leaks sensitive information.
The program has a variable \code{passwd}, which is marked initially as a sensitive source at 
line~\ref{line:markSource}.
Using this variable in an operation that creates a new value, e.g., in 
line~\ref{line:explicit}, is an explicit flow.
Consider the case where the password is ``topSecret'', i.e., the conditional 
at line~\ref{line:cond} evaluates to \code{true}, and line~\ref{line:write1} 
sets \code{gotIt} to \code{true}.
At line~\ref{line:sink}, the \code{gotIt} variable is sent to the network 
through the function \code{sink()}, which is considered to be an insensitive sink.  
The flow from the password to \code{gotIt} is an observable implicit flow 
because a sensitive value determines that \code{gotIt} gets written.
Now, consider the case where \code{passwd} is ``abc''. The
branch at line~\ref{line:cond} is not taken and the \code{gotIt} variable 
remains \code{false}. Sending this information to the network reveals that 
the password is different from ``topSecret''. This flow is a hidden implicit 
flow because a sensitive value determines that \code{gotIt} does not get 
written.

Ideally, an information flow analysis should consider all three kinds of 
flows.
In fact, there exists a large body of work on static, dynamic, hybrid, and 
multi-execution techniques to prevent explicit and implicit flows.
However, so far these tools have seen little use in practice, despite the 
strong security guarantees that they provide.
In contrast, a lightweight form of information flow analysis called 
\emph{taint analysis} is widely used in computer security~\cite{DBLP:conf/sp/SchwartzAB10}.
Taint analysis is a pure data dependency analysis that only tracks explicit 
flows, ignoring any control flow dependencies. 

The question which kinds of flows to consider is a tradeoff between costs 
and benefits.
On the cost side, considering more flows increases false positives~\cite{King2008}.
A false positive here means that a secure execution is conservatively
blocked by an overly restrictive enforcement mechanism. 
A common reason is that a value gets labeled as sensitive even though it does
not actually contain information that is security-relevant in practice.
This problem, sometimes referred to as
\emph{label creep}~\cite{RoblingDenning:1982:CDS:539308,Sabelfeld2003}, reduces the 
permissiveness of information flow monitoring, because the monitor will 
prematurely stop a program to prevent a value 
with an overly sensitive label from reaching a sink.
Another cost of considering more kinds of flows is an increase in runtime 
overhead.
On the benefit side, considering more flows increases the ability to find 
security vulnerabilities and data leakages, i.e., the level of trust one 
obtains from the analysis.
For example, an analysis that considers only explicit flows will miss any 
leakage of sensitive data that involves an implicit flow.
Unfortunately, despite the large volume of research on information flow 
analysis, there is very little empirical evidence on the importance of the 
different kinds of flows in real applications.
Because of this lack of knowledge, potential users of information flow 
analyses cannot make an informed decision about what kind of analysis to
use.


To better understand the tradeoff between costs and benefits of using a 
dynamic information flow analysis, this paper presents an empirical study of 
information flows in real-world JavaScript code.
Our overall goal is to better understand the costs and benefits of 
dynamically analyzing explicit, observable implicit, and hidden implicit 
flows.
Specifically, we are interested in how prevalent different kinds of flows 
are, what kinds of security problems can(not) be detected when considering 
subsets of flows, and what costs considering all flows imposes.
%
To address these questions, we study \numberapps{} real-world JavaScript 
programs in various application domains with a diverse set of security 
policies.
The study considers integrity problems, specifically code injection 
vulnerabilities and denial of service vulnerabilities caused by an 
algorithmic complexity problem,
and confidentiality problems, specifically leakages of uninitialized memory, 
browser fingerprinting and history sniffing.
Each studied program has at least one real-world security problem that 
information flow analysis can detect.

Our study is enabled by a novel methodology that combines state-of-the-art 
dynamic information flow  
analysis~\cite{Hedin2012,hedin2014jsflow,austin2010permissive} and program 
rewriting~\cite{birgisson2012boosting} with a set of novel \emph{security 
metrics}.
We implement the methodology in a dynamic information flow analysis built on 
top of Jalangi~\cite{Sen2013}.
The implementation draws on a sound analysis for a simple core of 
JavaScript.
The formalization relates the security metrics to semantic security 
conditions for taint tracking~\cite{explicitsecrecy}, observable tracking~\cite{DBLP:conf/esorics/BalliuSS17} and 
information flow monitoring~\cite{DBLP:dblpconf/sp/GoguenM82}.

The findings of our study include:
\begin{enumerate}
\item
    All three kinds of flows occur locally in real-life applications, i.e., an 
    analysis that 
    ignores some of them risks to miss violations of the information flow 
    policy. Explicit flows are by far the most prevalent, and only five benchmarks contain hidden implicit flows (Section~\ref{subsec:mf}).  
  \item
    An analysis that considers explicit and observable implicit flows, but 
    ignores hidden implicit flows, detects all vulnerabilities in our 
    benchmarks.
    For most applications it is even sufficient to track explicit flows 
    only, while for some client-side, privacy-related applications one must 
    also consider observable implicit flows (Section~\ref{subsec:s2s}). 
   
  \item
  Tracking hidden implicit flows causes an analysis to prematurely terminate 
  various executions. Furthermore, we find that different monitoring 
  strategies proposed in the literature vary significantly in their 
  permissiveness.  (Section~\ref{perm}).
  
   \item
  The amount of data labeled as sensitive steadily increases during the 
  execution of most benchmarks, confirming the label creep problem.
  An analysis that considers implicit flows increases the label creep by 
  over 40\% compared to an analysis that considers only explicit flows 
  (Section~\ref{subsec:lcr}).
    
 \item
    The analysis overhead caused  by considering implicit flows is 
    significant: Ignoring implicit flows saves the effort of tracking 
    runtime operations by a factor of 2.5 times (Section~\ref{runtime}).
\end{enumerate}

Prior work (discussed in Section~\ref{sec:relatedwork}) studies false 
positives caused by static analysis of implicit 
flows~\cite{King2008,Russo+:MOD09} and the semantic strength of 
flows~\cite{masri2009measuring}. Jang et al.~\cite{Jang+:CCS10} conduct a 
large-scale empirical study showing that
several popular web sites use information flows to exfiltrate data about users' behavior. Kang et al.~\cite{DBLP:conf/ndss/KangMPS11}
combine dynamic taint analysis with targeted implicit flow analysis, demonstrating the importance of tracking implicit flows for trusted
programs.
However, to the best of our knowledge, no 
existing work addresses the above questions.

In summary, this paper contributes the following:
\begin{itemize*}
  \item We are the first to empirically study the prevalence of explicit, 
    observable implicit, and hidden implicit flows in real-world 
    applications against integrity, availability, and confidentiality policies.
  \item We present a methodology and its implementation, which enables the study, and we provide a 
  formal basis for empirically studying information flows (Section~\ref{sec:methodology}).
  \item We show the soundness of the analysis for a core of JavaScript with 
    respect to semantic security conditions (Appendix).
  \item Through realistic case studies and security policies, we provide 
    empirical evidence that sheds light on the cost-benefit tradeoff of 
    information analysis and that outlines directions for future work (Section~\ref{sec:results}).
\end{itemize*}

We share our implementation, as well as all benchmarks and policies used for 
the study, to support future evaluations of information flow tools for 
JavaScript.\footnote{\tiny\url{https://new-iflow.herokuapp.com/download-iflow.html}} 

\section{Benchmarks and Security Policies}
\label{sec:background}

Our study is based on \numberapps{} client-side and server-side JavaScript applications, which suffer from four classes of vulnerabilities.
These applications are subject to  attacks that have been independently discovered by existing work, including 
integrity, availability, and confidentiality attacks. For every application, 
we define realistic security policies expressed as information flow 
policies.
Table~\ref{tab:apps} shows the applications, along with their security policies, and size measured in lines of code.
The benchmarks vary in size from tens of lines of code to tens of thousands.
We further explain the policies below.
For each application we either create or reuse a set of inputs that trigger the attack and other inputs to increase the coverage of different behaviors.



\begin{table}
	\caption{Insecure programs, security policies, program size, sensitive
		branch coverage and  number of upgrades. "module" stands for the module interface.}
	\centering
	\setlength{\tabcolsep}{.8pt}
	\footnotesize
	\def\arraystretch{1}%
	\begin{tabular}{@{}rp{2.3cm}p{4.2cm}rrc@{}}
		\toprule
		ID & Library & Policy & LoC & SBC &Upgs \\
		\midrule
		1 & fish  						& module $\rightarrow$ eval and
		exec
		& 69	& 1 & 0 \\
		2 & growl  						& module $\rightarrow$ eval and
		exec
		& 270 & 1 & 0  \\
		3 & gm  							& module $\rightarrow$ eval and
		exec
		& 1,614 & 1 & 0\\
		4 & libnotify						& module $\rightarrow$ eval and
		exec
		& 54 & 1& 0 \\
		5 & mixin-pro						& module $\rightarrow$ eval and
		exec
		& 168 & 1& 0  \\
		6 & modulify						& module $\rightarrow$ eval and
		exec
		& 2,410 &1& 0 \\
		7 & mol-proto						& module $\rightarrow$ eval and
		exec
		& 1,696 & 1& 0\\
		8 & mongoosify					& module $\rightarrow$ eval and
		exec
		& 160 & 0& 1 \\
		9 & m-log							& module $\rightarrow$ eval and
		exec
		& 243  & 1& 0	 \\
		10 & mobile-icon-resizer			& file system API $\rightarrow$
		eval and exec
		& 410  & 1 & 0 \\
		11 & mongo-parse					& module $\rightarrow$ eval and
		exec
		& 506 & 1 & 0 \\
		12 & mongoosemask					& module $\rightarrow$ eval and
		exec   &
		12,750  & 0.78&
		28 \\
		13 & mongui						& HTTP API $\rightarrow$ eval and
		exec
		& 1,539 & 0.44& 0
		\\
		14 & mongo-edit					& HTTP API $\rightarrow$ eval and
		exec
		&  577  & 0& 0 \\
		15 & mock2easy						& HTTP API $\rightarrow$ eval and
		exec   & 1,217  &
		0.07& 3\\
		16 & chook-growl-reporter			& module $\rightarrow$ eval and
		exec   &
		243  &
		1& 0\\
		17 & git2json						& module $\rightarrow$ eval and
		exec   &
		434  &
		1& 0\\
		18 & kerb\_request					& module $\rightarrow$ eval
		and
		exec
		& 67 &
		1& 0\\
		19 & printer						& module $\rightarrow$ eval and
		exec   &
		139 &
		1& 0\\
		\midrule
		20 & debug						& module $\rightarrow$ regex
		matching  &
		360 & 1&
		0\\
		21 & mime						& module $\rightarrow$ regex
		matching  &
		108 & 1&
		0\\
		22 & tough-cookie				& module $\rightarrow$ regex
		matching  &
		1,145 & 1&
		0\\
		23 & fresh						& module $\rightarrow$ regex
		matching  &
		59 & 0.5&
		0\\
		24 & forwarded					& module $\rightarrow$ regex
		matching  &
		30 &
		0& 0\\
		25 & underscore.string			& module $\rightarrow$ regex
		matching  &
		1,779 &
		1& 0\\
		26 & ua-parser-js				& module $\rightarrow$ regex
		matching  &
		584 & 0.50&
		6\\
		27 & parsejson					& module $\rightarrow$
		regex matching  & 46 & 1& 0\\
		28 & useragent					& module $\rightarrow$
		regex matching  & 6,827 & 1& 0\\
		29 & no-case					& module $\rightarrow$
		regex matching  & 33 & 1& 0\\
		30 & content-type-parser		& module $\rightarrow$ regex
		matching  &
		221 & 1&
		0\\
		31 & timespan					& module $\rightarrow$ regex
		matching  &
		577 & 0.20&
		4\\
		32 & string					& module $\rightarrow$ regex
		matching  &
		2,001 & 1&
		0\\
		33 & content					& module $\rightarrow$
		regex matching  & 125 & 0.42& 0\\
		34 & slug						& module $\rightarrow$
		regex matching  & 375 & 0.5& 2\\
		35 & htmlparser				& module $\rightarrow$ regex
		matching  &
		2,155 & 0.65&
		5\\
		36 & charset					& module $\rightarrow$
		regex matching  & 49 & 0.5& 0\\
		37 & mobile-detect				& module $\rightarrow$
		regex matching  & 612 & 1& 0\\
		38 & ismobilejs				& module $\rightarrow$ regex
		matching  & 935
		& 0.33&
		1\\
		39 & dns-sync					& module $\rightarrow$ regex
		matching  &
		76 &
		1& 0\\
		\midrule
		40 & ip						& buffer reading  $\rightarrow$ module &
		325 & 0.76& 0\\
		41 & concat-stream				& buffer reading  $\rightarrow$
		module & 132 & 1& 0\\
		42 & bl						& buffer reading  $\rightarrow$ module &
		206 & 0.72& 4\\
		43 & request					& buffer reading  $\rightarrow$ HTTP
		& 2,217 &
		0.52&
		0\\
		44 & ws						& buffer reading  $\rightarrow$ HTTP API &
		2,449 & 0.07&
		1\\
		45 & floody					& buffer reading  $\rightarrow$ HTTP API &
		94 &
		0.8& 0\\
		46 & tunnel-agent				& buffer reading  $\rightarrow$ HTTP
		API & 225 &
		1& 0\\

		\midrule

		47 & History sniffing~\cite{Jang2010}	& HTMLElement.color  $\rightarrow$
		img.src & 42  	&0& 3  \\
		48 & Font fingerpr.~\cite{acar2013fpdetective}	& HTMLElement.offsetWidth
		$\rightarrow$ img.src & 145	&0.5& 1  \\
		49 & Font fingerpr.\footnotemark{}
		& HTMLElement.offsetWidth
		$\rightarrow$ img.src & 44  	&0.02& 3  \\
		50 & Font
		fingerpr.\footnotemark{}
		& HTMLElement.offsetWidth
		$\rightarrow$ img.src & 134  	&1& 0  \\
		51 & Browser ext.\ fingerpr.~\cite{DBLP:conf/codaspy/SjostenAS17}
		& HTMLElement.offsetWidth
		$\rightarrow$ request.open & 1,451  	&1& 1  \\
		52 & DoNotTrack
		leakage\footnotemark{}
		& navigator\_doNotTrack
		$\rightarrow$ HTMLElement.html & 20	&0& 1  \\
		53 & Login state
		leakage\footnotemark{}
		& onload event
		$\rightarrow$ document.innerHTML & 191  	&1& 0  \\
		54 & Engine
		fingerpr.\footnotemark{}
		& HTMLElement.type
		$\rightarrow$ console.log &  129	&0& 1  \\
		55 & Browser ext.\ fingerpr.\footnotemark{}
		& onload event
		$\rightarrow$ HTMLElement.innerHTML & 37  	&0& 0  \\
		56 & Resource
		fingerpr.\footnotemark{}
		& onload event
		$\rightarrow$ console.log & 43 	&0& 0  \\

		\bottomrule
	\end{tabular}
	\label{tab:apps}
\end{table}

\addtocounter{footnote}{-6}
\footnotetext{\tiny 
	\url{https://www.privacytool.org/AnonymityChecker/}}
\addtocounter{footnote}{1}
\footnotetext{\tiny\url{http://www.lalit.org/lab/javascript-css-font-detect/}}
\addtocounter{footnote}{1}
\footnotetext{\tiny\url{https://browserleaks.com/js/donottrack.js}}
\addtocounter{footnote}{1}
\footnotetext{\tiny\url{https://robinlinus.github.io/socialmedia-leak/}}
\addtocounter{footnote}{1}
\footnotetext{\tiny\url{https://www.privacytool.org/AnonymityChecker/}}
\addtocounter{footnote}{1}
\footnotetext{\tiny\url{https://popmyads.com/}}
\addtocounter{footnote}{1}
\footnotetext{\tiny\url{https://browserleaks.com/firefox\#more}}

Our goal is an \emph{in-depth} study of the different kinds of information 
flows for a range of security policies; we do not claim to study a representative 
sample of JavaScript applications.
Existing \emph{in-breadth} empirical studies, which analyze hundreds of 
thousands of web pages against fixed policies, provide clear evidence for 
security and privacy risks in JavaScript 
code~\cite{Jang2010,Lekies2013,domxss:ndss18}. In contrast to these 
large-scale studies, our effort
consists in identifying vulnerable scripts from different domains and 
analyzing the flows therein.

\paragraph{Injection vulnerabilities on \node{}} The \node{} 
ecosystem has enabled a proliferation of  server and desktop applications 
written in JavaScript.
Injection vulnerabilities are programming errors that enable an attacker to 
inject and execute malicious code.
Recent work~\cite{ndss2018} has demonstrated the devastating impact of 
injection vulnerabilities on server-side programs, e.g., when
an attacker-controlled string reaches powerful APIs such as \code{exec} or 
\code{eval}.
Such attacks can severely compromise integrity, e.g.,
deleting all files in a directory or completely controlling the attacked 
machine.
%
We study 19 \node{} modules that contain injection vulnerabilities (IDs 1 to 19 in Table~\ref{tab:apps}).  As 
security policies, we consider the interface of a module as an untrusted 
source and the APIs that interpret strings as code, such as
\code{exec} or \code{eval}, as trusted sinks.

\paragraph{ReDoS vulnerabilities} Regular expression Denial of Service, or 
ReDoS, is a form of algorithmic complexity attack that exploits the possibly 
long time of matching a regular expression against an attacker-crafted 
input.
The single-threaded execution model of JavaScript makes JavaScript-based web 
servers particularly susceptible to ReDoS attacks~\cite{usenixSec2018}.
We analyze 19 web server applications that are subject to ReDoS attacks (IDs 20 to 39 in Table~\ref{tab:apps}).
As a security policy, we consider data received via HTTP requests as 
untrusted sources and regular expressions known to be vulnerable as trusted 
sinks.

\paragraph{Buffer vulnerabilities}
Buffer vulnerabilities expose memory content filled with previously used
data, e.g., cryptographic keys, source code, or system information.
In \node{}, such vulnerabilities occur when using the \code{Buffer} 
constructor without explicit initialization.
Buffer vulnerabilities are similar to the infamous Heartbleed flaw in 
OpenSSL~\cite{heartbleed}, as both allow an attacker to read more memory 
than intended.
We analyze 7 applications subject to buffer vulnerabilities (IDs 40 to 46 in Table~\ref{tab:apps}).
The security policy requires that no information flows from the buffer 
allocation constructor to HTTP requests without initialization.

\paragraph{Device fingerprinting and history sniffing}
Web-based fingerprinting collects device-specific information, e.g., 
installed fonts or browser extensions, to identify users~\cite{Acar2014}.
History sniffing attacks use the fact that browsers display links 
differently depending on whether the target has been 
visited~\cite{Jang2010,weinberg2011still}.
We analyze 10 client-side JavaScript applications that are subject to 
various forms of fingerprinting and history sniffing attacks (IDs 47 to 56 in Table~\ref{tab:apps}).
The security policies label as secret the sources that provide sensitive 
information, e.g., the font height and width, and as public sinks the APIs 
that enable external communication, e.g., image tags.
We adapt these programs to our Node.js-based infrastructure by introducing
minimal changes that emulate DOM interactions. We carefully cross-checked 
this adaptations in a pair-programming fashion, ensuring that all flows in 
the original program are preserved. The policies are application-specific 
and mark certain nodes in the emulated DOM as sources and sinks. In contrast 
to the other benchmarks, these programs can potentially be 
malicious~\cite{DBLP:conf/sp/NikiforakisKJKPV13,Jang2010}.  That is, the 
assumption that the analyzed code is trusted  does no longer hold.

\section{Methodology}
\label{sec:methodology}

To address the research questions from Section~\ref{sec:intro}, we present a 
methodology that combines a set of novel metrics with a dynamic information 
flow analysis~\cite{Hedin2012,hedin2014jsflow}, a monitoring 
strategy~\cite{austin2010permissive}, and an automated mechanism to insert
upgrade statements~\cite{birgisson2012boosting}.
The metrics summarize the flows observed during the program execution. 
This section provides the necessary background on information flow analysis, 
an informal description of our methodology, and definitions of the  
metrics. It also presents a formalization of the core of our 
methodology.

\subsection{Setting: Information Flow Analysis}
\label{sec:background1}

\begin{table}[tb]
  \setlength{\tabcolsep}{1.3pt}
  \renewcommand{\arraystretch}{1.1}
  \centering
  \footnotesize
  \caption{Monitoring strategies (``Expl.'' = explicit, ``Obs.'' 
  = observable implicit, ``Hid.'' = hidden implicit).}
  \begin{tabular}{@{}p{9em}p{8.2em}cccp{10em}@{}}
    \toprule

Strategy & Sec. condition & \multicolumn{3}{c}{Tracked flows} & 
Permissiveness \\
\cmidrule{3-5}
&& Expl. & Obs. & Hid. \\

\midrule

Taint tracking & Explicit \mbox{secrecy} & \checkmark &&&
Stop when $H$-labeled value reaches sink. \\

Observable tracking & Observable secrecy & \checkmark & \checkmark &&
Stop when $H$-labeled value reaches sink. \\

No Sensitive Upgrade &
Non-interference &
\checkmark &
\checkmark &
\checkmark &
Stop when $L$-labeled variable is written in sensitive context.\\

Permissive Upgrade &
Non-interference &
\checkmark &
\checkmark &
\checkmark &
Stop when partially leaked value is used.\\

    \bottomrule
  \end{tabular}
  \label{tab:strategies}
\end{table}

\paragraph*{Security labels}
An information flow analysis associates each value with a \emph{security 
label} that indicates how sensitive the value is. Labels are typically 
arranged in a lattice~\cite{Denning1976}.
To ease the presentation, we focus on two labels: $H$ (for high or sensitive) 
and $L$ (for low or insensitive), where $H$ is more sensitive than $L$.
Given a label $\ell\in \{H,L\}$, we write $v^\ell$ to denote that a value 
$v$ has security label $\ell$. If a value $v$ does not have a label, we 
assume it is implicitly labeled as $L$.

\paragraph*{Information flow policy}
The analysis checks whether data from a sensitive \emph{source} influences 
data that arrives at an insensitive \emph{sink}.
The sources and sinks for a program are specified in an \emph{information 
flow policy}, or short, \emph{policy}.
For integrity, the policy specifies that no information from untrusted 
sources ($H$) reaches trusted sinks ($L$).
For confidentiality, the policy stipulates that no information from secret 
sources ($H$) reaches public sinks ($L$).
We model sources by variables and object fields, and their security label 
corresponds to the label of the value that they contain initially.
We denote sinks by a function \code{sink()}, which is implicitly 
labeled as $L$.

\paragraph*{Monitoring strategies}
Different \emph{monitoring strategies} for dynamic information flow analysis 
address the problem of checking whether an execution violates a policy.
In this work, we focus on flow-sensitive dynamic monitors, where variables 
can be assigned  different security labels during the execution.
Table~\ref{tab:strategies} gives an overview of the monitoring strategies 
studied in this paper. 
Taint analysis tracks only explicit flows and stops the program only if an 
$H$-labeled value reaches a sink.

In contrast to taint tracking, the other two strategies also track implicit 
flows.
The monitors identify implicit flows by maintaining a \emph{security stack} 
that contains all sensitive labels of expressions in conditionals that influence the control flow.
When the stack is non-empty, the program executes in a \emph{sensitive 
context}.
Observable Tracking~\cite{DBLP:conf/esorics/BalliuSS17} tracks only explicit 
and observable implicit flows, but ignores hidden implicit flows.
Whenever an $L$-labeled variable is updated in a sensitive context, 
observable tracking updates the label as sensitive and continues with the 
execution.
For example, consider the following program, which is trivially secure 
because there is no call to \code{sink()}:

\begin{lstlisting}
var location; var y; var z;
if (10 < location < 20)/*#\label{line:cond1}#*/ {
  y = "Home";/*#\label{line:write3}#*/ }  
//upgrade(y);/*#\label{line:write5}#*/  
z = "You are at " + y;/*#\label{line:write4}#*/         
\end{lstlisting}

Consider now an execution where the location is $15^H$.
Observable tracking updates the labels of \code{y} and \code{z} as sensitive 
and does not stop the execution.

The strictest monitoring strategies try to prevent also hidden implicit
flows.
We consider two variants of such a strategy.
They both terminate the execution of the program whenever an observable 
implicit flow may lead to a hidden implicit flow in another execution.
The No Sensitive Upgrade strategy (NSU)~\cite{Zdancewic:2002:PLI:935787,austin2009efficient}
disallows updating the security labels of a variable in a sensitive context.
In particular, it terminates the execution whenever such an update happens.
For example, consider the execution of the above program with 
\code{location=15$^H$}.
The NSU strategy terminates the program at line~\ref{line:write3} due to 
the update of the $L$-labeled variable \code{y} in a sensitive context.

Permissive Upgrade (PU)~\cite{austin2010permissive}  is a refinement of the 
NSU strategy.
It labels a value as \emph{partially leaked} if an $L$-labeled variable is 
updated in a sensitive context, and terminates the program if the updated 
variable is further used outside the sensitive context.
Consider again the same execution of the above program.
The PU strategy labels \code{y} as partially leaked at 
line~\ref{line:write3} because the program writes to the $L$-labeled 
variable in a sensitive context, and then terminates the program at 
line~\ref{line:write4} because the value is used.
In our work, we use the PU strategy to study the prevalence of different 
kinds of flows.

\paragraph*{Upgrade statements}
Naively applying the PU strategy to real-world programs can be very 
restrictive and risks to increase the number of false positives, i.e., 
terminate many secure executions. To address this problem, Austin and 
Flanagan propose the \emph{upgrade statement}~\cite{austin2009efficient} and 
the \emph{privatization statement}~\cite{austin2010permissive}.  These 
statements change the label of a variable to $H$ explicitly, to signal a 
potential hidden implicit flow to the monitor.  %
For example,  we can insert an upgrade statement before 
line~\ref{line:write4} in the above example to mark \code{y} as sensitive 
even if the branch is not taken.
As a result, the program does not terminate immediately when the value is 
read.
If the program would later call \code{sink(y)}, then the monitor would 
terminate the program and report a policy violation.

%


\paragraph*{Permissiveness}
The above example illustrates the permissiveness issues of different 
monitoring strategies, i.e., that they terminate the program unnecessarily 
even though no policy violation occurs.
Taint tracking and observable tracking both do not terminate the program.
In contrast, both NSU and PU terminate the program unnecessarily.
This overapproximation of policy violations is necessary to avoid
potential hidden implicit flows.
Adding upgrade statements avoids such premature termination of the program 
by assigning an $H$-label to \code{y}, independently of what branch of the 
conditional statement is executed.
If we uncomment line~\ref{line:write5}, the execution proceeds without 
terminating the program unnecessarily.
That is, upgrade statements may increase the permissiveness, but impose the 
cost of adding upgrade statements.

\subsection{Security Metrics}

Our approach uses program testing to measure the prevalence of different 
kinds of information flows. The basic idea is to test a program with an 
information flow monitor that implements the PU strategy, while incrementing 
counters that represent the number of explicit, observable implicit, and 
hidden implicit flows.
These counters then allow us to reason about the prevalence of the different 
kinds of flows and about the policy violations that different monitoring 
strategies would detect.
In contrast to the PU monitor that terminates the program when it encounters 
a policy violation, our monitor continues the execution to measure flows in 
the remainder of the execution. We refer to Section~\ref{sec:secFram}
for the formal definition of the monitor.

We consider information flows at two levels of granularity.
On the one hand, we consider flows induced by a single operation in the program 
(Section~\ref{sec:metricsMicroFlows}).  We call such flows \emph{micro 
flows} or simply flows. Studying flows at the micro flow level is worthwhile because it 
provides a detailed understanding of the operations that contribute to higher-level 
flows. In particular, flows provide a quantitative answer to the permissiveness challenges faced
by state-of-the-art dynamic monitors that implement the NSU or the PU strategy.
On the other hand, we consider transitive flows of information between a 
source and a sink, called \emph{source-to-sink flows} 
(Section~\ref{sec:S2SFlows}).
Studying flows at this coarse-grained level is worthwhile because 
source-to-sink flows are what security analysts are interested in 
when using an information flow analysis.

The metrics presented in this section measure the prevalence of flows 
quantitatively, and do not attempt to judge the importance of flows.
To ensure that our flows represent relevant problems, our study uses 
real-world security problems and policies that capture these issues.


\subsubsection{Micro Flows}
\label{sec:metricsMicroFlows}

To measure how many explicit, observable implicit, and hidden implicit 
flows exist, our monitor increments the counters for these micro
flows as follows.

\paragraph*{Explicit flows}
The approach counts an explicit flow for every assignment event where the 
written value is sensitive but the value that gets overwritten (if any) is not 
sensitive. The rationale is to capture program behavior where sensitive 
information flows to a memory location that  stores insensitive information. 
In contrast, overwriting a sensitive value with another (in)sensitive value 
does not leak any new information, and therefore does not count as an 
explicit flow.
    
 For example, consider this code:
\begin{lstlisting}[mathescape]
var x = 3$^H$; var y = 5$^H$; var z;
x = y; // no explicit flow
z = x; // explicit flow
\end{lstlisting}

\paragraph*{Observable implicit flows}
The approach counts an observable implicit flow for every assignment event that 
happens in a sensitive context and that overwrites an insensitive value.  Similar 
to explicit flows, the rationale is to capture program behavior that writes 
sensitive information to a memory location that stores insensitive information. 
The main difference is that the assignment happens because of a control 
flow decision made based on a sensitive context.  Note that it is irrelevant 
whether the written value is sensitive because the fact that a write happens 
leaks sensitive information. 

For example, consider this code:
\begin{lstlisting}[mathescape]
var x = true$^H$; var y = 3; var z;
if (x)
  y = 5; // observable implicit flow
z = 7;   // no flow
\end{lstlisting}

\paragraph*{Hidden implicit flows}
The approach counts a hidden implicit flow for every execution of an upgrade 
statement of a variable containing insensitive information.  The rationale is to capture assignment events that did not
happen, but that could have happened during the execution if a control flow 
decision that depends on a sensitive value would have been different.

For example, consider this code:
\begin{lstlisting}[mathescape]
var x = false$^H$; var y; var z;
if (x)
  y = 5;        // not executed, no flow
upgrade(y);     // hidden implicit flow 
z = y;          // hidden implicit flow
\end{lstlisting}

\subsubsection{Label Creep}
As mentioned earlier, a common reason for false positives is label
creep. Since measuring false positives would be subject to a given
source-to-sink policy, we focus on measuring the prevalence of the
more general phenomenon of label creep in micro flows.
Recall that 
this concept refers to the fact that information flow analysis may 
quickly label a large portion of all values handled in a program as 
sensitive.
In most of the cases, this leads to an explosion in false positives that in 
turn reduces the usefulness of the analysis. We propose a novel metric called
\emph{Label Creep Ratio} (LCR) to assess how many variables and object fields in memory are 
labeled as sensitive.
\begin{equation*}
\LCR = \frac{\textrm{\# sensitive variables/fields ever assigned}}{\textrm{\# variables/fields ever
assigned}}
\end{equation*}
For a given monitoring strategy, the Label Creep Ratio is the ratio between 
the number of assignments of
$H$-labeled values and the total number of assignments.
Intuitively, measuring the LCR throughout an execution
estimates the speed at which the memory locations get assigned sensitive
labels.

\subsubsection{Source-to-sink Flows}
\label{sec:S2SFlows}

To what degree do different kinds of flows contribute to policy violations?
To address this question, we consider transitive flows from a source of 
sensitive information to a sink of insensitive information. For instance, none of the 
flows in the examples above correspond to a source-to-sink flow, since no 
sink statement is present. 

Now, consider the code:

\begin{lstlisting}[mathescape]
var x = false$^H$; var y; var z;
if (x)
  y = 5;
upgrade(y); // hidden micro flow
z = x;      // explicit micro flow
sink(y);    // source-to-sink flow
\end{lstlisting}

The program contains two micro flows and one source-to-sink flow.
However, if the execution is analyzed with taint tracking or observable 
tracking, the source-to-sink flow is missed, because it occurs only due to 
the upgrade statement.

As another example, consider the following code:

\begin{lstlisting}[mathescape]
var x = true$^H$; var y; var z;
if (x)
  y = 5;   // observable flow
z = x;     // explicit flow
sink(y+z); // source-to-sink flow
\end{lstlisting}

The source-to-sink flow will be detected by all three kinds 
of monitoring strategies, because the variable \code{z} gets labeled $H$ via 
an explicit micro flow and then gets passed to the sink.

As illustrated by these two examples, we measure how many source-to-sink 
flows different monitoring strategies detect by tracking what micro flows 
contribute to a source-to-sink flow.
Furthermore, to count the number of unique source-to-sink flows that a 
monitor detects, we compute the set of source code locations involved in 
each source-to-sink flow.
If the code locations of two source-to-sink flows are the same, we 
count them as only one unique flow. This corresponds to the way a human
security analyst would inspect warnings produced by an  
analysis.

\subsubsection{Inference of Upgrade Statements}
\label{sec:onlinePhase}

The approach described so far requires a program that indicates hidden implicit
flows through upgrade statements. To obtain such a program, we adapt a
testing-based technique for automatically inserting upgrade
statements~\cite{birgisson2012boosting}. The basic idea is to repeatedly execute
the program with a particular policy, to monitor the execution for potentially
missed hidden implicit flows (using the PU 
strategy~\cite{austin2010permissive}, see Section~\ref{sec:background1}), and to
insert upgrade statements that signal them to the monitor when counting
micro flows. Whenever the monitor
terminates the program because it detects an access to a value $u$ that is
marked as partially leaked, the approach modifies the program by inserting an
upgrade statement at the code location where $u$ is next used; this upgrade
statement in the modified program will then be executed whenever $u$ is used
again, regardless of whether the same branch that leads to the insertion of the
upgrade statement is taken. The process continues until it reaches a fixed
point, i.e., until the program has enough upgrade statements for the given
tests.

The ability of our analysis to observe hidden implicit flows depends on the
completeness of the inferred upgrade statements, since missing upgrade
statements may result in false negatives for hidden implicit flows. How often
this occurs depends on how well the analyzed executions cover the branches of
the programs. One way to assess this ability would be to measure tradition
branch coverage, i.e., the percentage of all branches that are covered by the
given test inputs. However, traditional branch coverage is only of limited use
because inserting upgrade statements does not rely on covering all branches in
the code, but only on a subset. Specifically, the ability to insert upgrade
statements depends on the branch coverage for conditionals that depend
on sensitive values. We present a metric called \emph{Sensitive Branch Coverage}
(SBC) that captures this idea:
\begin{equation*}
  \SBC = \frac{|\{ c \in C \ \text{where both true and false branch covered}
  \}|}{|C|}
\end{equation*}

where $C$ is the set of conditionals that depend on a sensitive value.
For example, consider executing the following program with 
\code{x=false$^H$}:
\begin{lstlisting}[mathescape]
var x; var y
if (x)/*#\label{line:uncovCond}#*/
  y = 5;
\end{lstlisting}
The set $C$ consists of the conditional at line~\ref{line:uncovCond}, but 
since the execution covers only the false branch, $\SBC =
\frac{0}{1}=0$.

\subsection{Formalization of Flows and Conditions}
{
  \makeatletter
\def\old@comma{,}
\catcode`\,=13
\def,{%
  \ifmmode%
    \old@comma\discretionary{}{}{}%
  \else%
    \old@comma%
  \fi%
}
\makeatother
\label{sec:secFram}
%
We define the syntax and semantics of NanoJS, a simplified
core of JavaScript to illustrate the flow counting performed by
our implementation.

\textbf{Notation: }
We denote empty sequences by $\emptyTrace$. Concatenating two
sequences $\tau_1$ and $\tau_2$ is denoted by $\tau_1 . \tau_2$. Slightly
abusing notation, we also use the same notation to prepend a single element
$\alpha$ to a sequence $\tau$ by writing $\alpha . \tau$. Similarly, we
write $\alpha \in \tau$ to denote that $\alpha$ occurs in sequence $\tau$.

\textbf{NanoJS syntax:}
NanoJS statements:

{\small
\begin{mathpar}
\Stmt \mathrel{::=}
\begin{array}[t]{l}
                 \pskip
  \mathrel{|}    \terminated
  \mathrel{|}    c_1 ; c_2
  \mathrel{|}    \psink{e}
  \mathrel{|}    \passign{x}{e}
  \mathrel{|}    \passignArr{x}{y}{e}
  \mathrel{|}
\\               \pif{e}{c_1}{c_2}
  \mathrel{|}    \pwhile{e}{c}
\end{array} \\

\text{where } x, y \in \Name, \text{ and } e \in \Expr
\end{mathpar}
}

A terminated execution is denoted by $\terminated$. All function calls
to sinks with expression $e$ are modeled by $\psink{e}$; other function
calls are not considered in NanoJS.

\textbf{Semantics: } Operationally, the constructs in NanoJS behave as in
standard imperative languages.
To count micro flows, we associate each primitive value
with a tuple $\cnt : \Cnt$ of flow counts, where $\Cnt = \mathbb{N}^3$.  A 
tuple $(e, o, h) \in \Cnt$ denotes $e$ explicit flows, $o$ observable flows, 
and $h$ hidden flows.
A value is either a primitive value annotated with a flow count, or an 
address on
the heap. We assume that there is a set $\Base$ of primitive base types, 
such as boolean, numbers, and strings. A heap object $o \in \Obj$ maps a 
finite set of names to values. We write \ptrue{}
for boolean value \emph{true} and \pfalse{} for boolean value \emph{false}.

We use flow counts to track how information is propagated by a program,
analogous to labels in other information flow monitors. We define a 
join-semilattice
structure for flow counts as follows.
Intuitively, a non-zero flow count
indicates a sensitive value, whereas if all flow counts are zero, the value 
is insensitive:
The join of two flow counts is defined
as $\cnt_1 \sqcup \cnt_2 = \cnt_1 + \cnt_2$, where $\cnt_1 + \cnt_2$
denotes the pointwise addition of the two flow counts.
Two flow counts satisfy $\cnt_1 \flowsto \cnt_2$
if whenever $\cnt_2 = (0,0,0)$ then $\cnt_1 = (0,0,0)$.

A configuration $\conf{c, \env, h, t, \cnt}$ consists of a statement $c \in
\Stmt$, an environment $\env : \Name \to \Value$ mapping variable names to
values, a heap $h : \Addr \to \Obj$, a stack of security levels $t
\in \Label^\star$, and a sink counter $\cnt : \Cnt$ counting flows reaching sink
statements; we denote the set of configurations by $\Conf$.
An execution of a NanoJS program yields a trace $\Trace = \Value^\star$
indicating outputs produced by the execution.


We now define the small-step semantics of NanoJS. A step $\conf{c, \env, h, t,
  \cnt} \step{\tau} \conf{c', \env', h', t', \cnt'}$ denotes a single
evaluation step producing trace $\tau$.
%
%
We write $\terminated$ for a terminated execution.
Slightly abusing notation, we define $\bigsqcup(h, v)$ as the join of all labels
occurring in value $v$ with heap $h$. For simplicity, we assume that there 
are no cyclical references on the heap.

The function $\upgrade(x, \env, h)$ denotes the pair $(\env', h')$, where
the hidden flow count of all components of the value of a variable $x \in 
\Name$ is incremented by 1.
%
To update flow counts, we use an auxiliary function $\Delta : \Cnt \times \Cnt
\times \Cnt^\star \to (\Cnt)$.
Intuitively, $\Delta(\cnt_{old}, \ell_{new},
t)$ increments the explicit and observable flow counters for assigning a 
value with flow count $\cnt_{new}$ to a location with label $\cnt_{old}$ 
while the security stack is $t$. We define $\Delta(\cnt_{old}, \cnt_{new}, t) =
(\Delta_e, \Delta_o, 0)$ where $\Delta_e =
\begin{cases}  1 & \cnt_{new} \not\flowsto \cnt_{old} \\
    0 & \text{otherwise}
  \end{cases}$ and $\Delta_o = \begin{cases}
    1 & \ell_{old} = \zeroFlow \land \bigsqcup t \neq \zeroFlow \\
    0 & \text{otherwise}
  \end{cases}$

To define observations based on references passed to
sinks, we use a helper function $\toVal(h, v) : \{ \Base \times \Name^\star 
\}$ that, given a value, returns all references to heap objects reachable 
from the value.
We denote evaluating an expression $e$ in environment $\env$ and
heap $h$ by $\expeval{e}{\env, h}$.
The rules propagate flow counts
into the result values; for example, adding two values with one explicit
flow each will result in two explicit flows in the result.
We assume, contrary to real-world JavaScript, that expressions do not have 
side effects.

\begin{figure}[tb]
  \centering
  {\small
  \begin{mathparpagebreakable}
  \inferrule[E-Assign]
  {
    \expeval{x}{\env, h} = v_x \\
     \cnt_x = \bigsqcup(h, v_x) \\
     \expeval{e}{\env, h} = v^{\cnt_e} \\
    \cnt' = \cnt_e + \Delta(\ell_x, \ell_e, t) \\
    v' = v^{\cnt'} \\
    \env' = \env[x \mapsto v']
  }
  {\conf{\passign{x}{e}, \env, h, t, \cnt} \step{}
    \conf{\terminated, \env', h, t, \cnt}}
 \and
    \inferrule[E-If]
  { \expeval{e}{\env, h} = v^\cnt \\
    i = {\begin{cases}
        1 & v = \ptrue\\
        2 & \text{otherwise}
        \end{cases}}
  }
  { \conf{\pif{e}{c_1}{c_2}, \env, t, \cnt} \step{}
    \conf{\pseq{c_i}{\ppop}, \env, \cnt . t', \cnt} }
  \and
  \inferrule[E-Sink]
  { \expeval{e}{\env, h} = v^{\cnt} \\
    \cnt_a = \bigsqcup(v, h) \\
    \cnt' = \cnt + \cnt_a + \Delta(\zeroFlow, \cnt_a, t)
  }
  {\conf{\psink{e}, \env, h, t, \cnt} \step{\toVal(h, v)}
   \conf{\terminated, \env, t, \cnt'}}
  \and
  \inferrule[E-UpgradeL]
  { \expeval{x}{\env, h} = v^{\zeroFlow} \\
    v' = v^{(0, 0, 1)} \\
    (\env', h') = \upgrade(x, \env[x \mapsto v'], h)
  }
  {\conf{\pupgrade{x}, \env, h, t, \cnt} \step{}
    \conf{\terminated, \env', h', t, \cnt}}
\end{mathparpagebreakable}
}
  \caption{Rules for NanoJS with flow counting.}
  \label{fig:rules}
\end{figure}

Finally, Figure~\ref{fig:rules} gives the rules of small-step operational 
semantics for NanoJS with flow counting.
The way the rules modify the environment and heap is standard.
Some standard rules are omitted and provided in the appendix.
In addition to the standard execution of a program, the semantics also track 
flow counts
for each value. For example, an assignment statement $\passign{x}{e}$
propagates the flow counts of the assigned expression $e$ and additionally
increments the explicit flow count if $e$ has non-zero flows and the 
observable flow count if the control-flow path
is determined by sensitive data.
A sink statement $\psink{e}$ increments global counts representing
source-to-sink flows.
Since all sink statements model writes to
\emph{insensitive} sinks, any write of an expression with non-zero
flow counts will result in incrementing the global counters.

\textbf{Security conditions: }
We also adapt existing security conditions
for tracking only explicit or observable flows to NanoJS~\cite{DBLP:conf/esorics/BalliuSS17}. To capture only
explicit flows, we use the notion of \emph{explicit secrecy}; intuitively, a
run of a program satisfies explicit secrecy if and only if the program
obtained by sequentially composing all non-control-flow commands executed
during that run does not leak information. For example, the program
$\pseq{\pif{h}{\passign{l}{1}}{\passign{l}{2}}}{\psink{l}}$ would
produce the extracted programs $\pseq{\passign{l}{1}}{\psink{l}}$
or $\pseq{\passign{l}{2}}{\psink{l}}$ depending on the value
of $h$ in a given run. In both cases, the extracted program
contains prohibited information flows, since the source program
only leaks information through an \emph{implicit} flow.

To track only explicit and observable implicit flows, we keep branching
constructs in the extracted program, but replace not taken branches by $\pskip$.
If the extracted program does not leak sensitive information, then the run satisfies
\emph{observable secrecy}. For example, in the program
$\pseq{\passign{l}{0}}{\pseq{\pif{h}{\passign{l}{1}}{\pskip}}{\psink{l}}}$,
observable secrecy would extract either
$\pseq{\passign{l}{0}}{\pseq{\pif{h}{\passign{l}{1}}{\pskip}}{\psink{l}}}$ or
$\pseq{\passign{l}{0}}{\pseq{\pif{h}{\pskip}{\pskip}}{\psink{l}}}$. This matches
the intuition that an observable flow only occurs in the run where $h$ is
$\ptrue$, where the assignment $\passign{l}{1}$ is executed, but not in
a run where $h$ is $\pfalse$, since this run only leaks information
through a hidden implicit flow; i.e. the extracted program when
$h = \ptrue$ leaks information, but the extracted program for $h = \pfalse$
does not.
Appendix~\ref{app:security-definitions} gives formal definitions of the two 
notions.

\textbf{Soundness: } To establish soundness of our counting scheme, we show that
if all explicit flow counts for all sinks for a given run are $0$, then that run
satisfies explicit secrecy. Similarly, we show that if all explicit and
observable flow counts are $0$, the run satisfies observable secrecy. The formal
theorem statements and proofs can be found in Appendices~\ref{app:soundness}
and~\ref{app:proofs}. 
}

\subsection{Implementation}
\label{sec:implementation}

To implement our methodology, we develop a tool for dynamic information flow analysis 
following Hedin at al.~\cite{Hedin2012,hedin2014jsflow}. The implementation builds on 
Jalangi~\cite{Sen2013}, a dynamic analysis framework for JavaScript that 
uses source-to-source transformation. Since Jalangi supports
ECMAScript~5 only, we down-compile programs written in newer versions of the 
language with Babel~\cite{babel}. Building on top of
Jalangi allows us to focus on the important parts of the analysis and 
let the framework handle otherwise challenging aspects of implementing a 
dynamic information flow analysis, e.g., on the fly instrumentation of code 
produced by \texttt{eval}, exceptional termination of functions, boxing and 
unboxing of primitive values~\cite{Chudnov:2015:IIF:2810103.2813684}.  
We handle higher-order 
functions and track dynamic modification of object properties as described 
by Hedin et al.~\cite{Hedin2012}. Our policy language is 
expressive, allowing the security analyst to mark both functions and 
arguments of callbacks as sources.

To approximate the effects of native calls, we model them by transferring the 
labels from all parameters to the return value. Moreover, if one of the 
parameter is an object, we propagate labels from all its properties to the 
return value. For a set of frequently used native functions, such as 
\texttt{Array.push}, \texttt{Array.forEach}, \texttt{Object.call}, and
\texttt{Object.defineProperty}, we create richer models that propagate 
labels more precisely. To increase the confidence in our implementation, we 
created more than 100 validation tests that assert the correctness of label 
propagation in typical usage scenarios.
When inserting upgrades, the implementation does not modify the actual 
source code but it stores the source code locations of upgrades, and then 
performs the upgrades at runtime.

\section{Empirical Study}\label{sec:results}

This section presents the results of our empirical study that assesses the 
costs and benefits of tracking different kinds of flows.

The last two columns of Table~\ref{tab:apps} show the sensitive branch 
coverage (SBC) and the number of upgrades inserted while executing the 
benchmarks.
Overall, the tests used for the study reach a high SBC, for 54\% of the programs even 100\%, 
enabling the analysis to insert upgrade statements.
For each of the considered benchmarks, our tool can detect source-to-sink 
flows.
This is hardly surprising, since we already know that the programs contain 
such flows, but it shows that our tool can handle complex, real-life 
JavaScript code.




\begin{figure*}
	\centering
	\epsfig{file=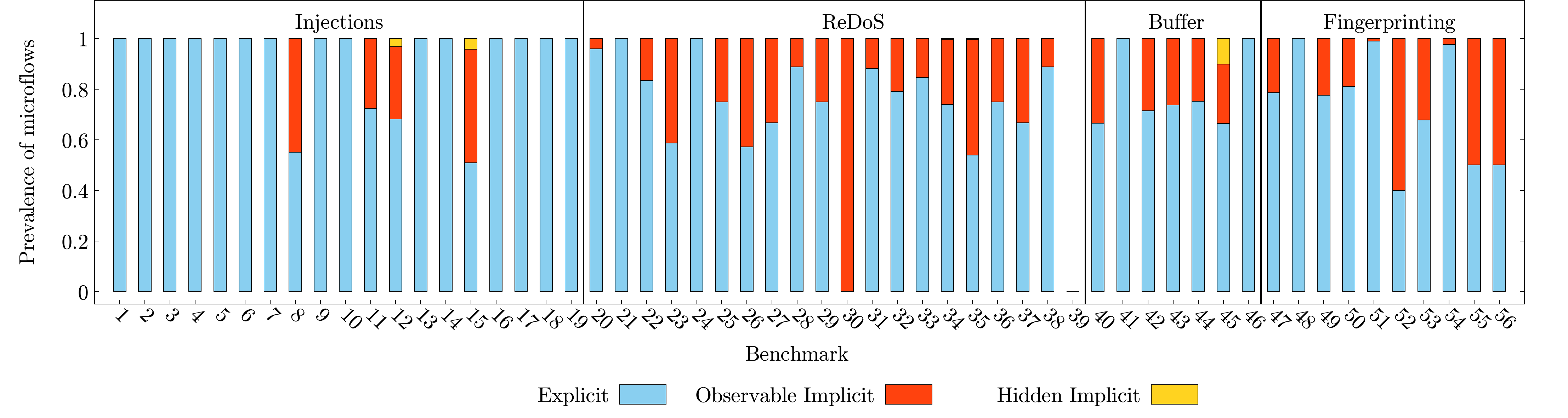, width=\textwidth}
	\caption{Prevalence of micro flows. }
	\label{fig:prevalenceRealistic}
\end{figure*}

\begin{figure*}
	\centering
	\epsfig{file=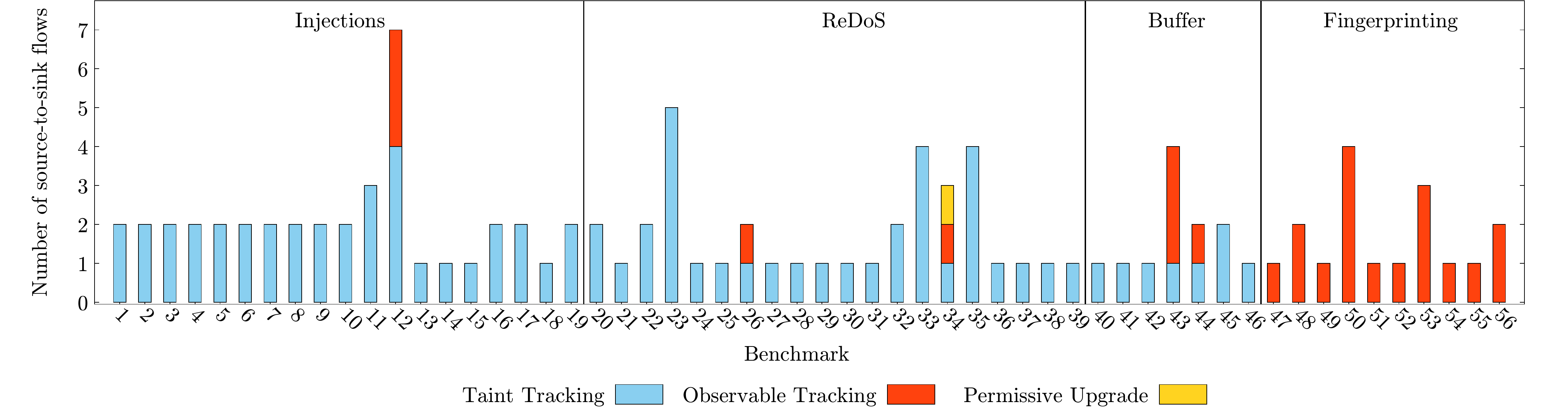, width=\textwidth}
	\caption{Number of source-to-sink flows detected at different security modes. }
	\label{fig:diffLevels}
\end{figure*}

\subsection{Prevalence of Micro Flows}
\label{subsec:mf}

At first, we address the question of how prevalent explicit, observable 
implicit, and hidden implicit micro flows are among all operations that induce 
an information flow.
Figure~\ref{fig:prevalenceRealistic} shows the distribution of micro flows  
for our benchmarks. The majority of benchmarks contain both implicit and 
explicit micro flows. Benchmark~39 is a special case where reaching the sink 
is the first operation performed on the untrusted data, and hence the data 
flows directly from source to sink without producing any micro flow. The 
explicit flows are by far the most prevalent, appearing in all but one 
benchmarks.  \benchmarksWithHidden{} benchmarks also contain hidden implicit 
flows, but we can safely conclude that these cases are rare.


\subsection{Source-to-sink Flows}
\label{subsec:s2s}

We now evaluate source-to-sink flows, which are the ultimate measure of 
success for an information flow analysis.
Source-to-sink flows are what a security analysts ultimately cares about: 
how does information from a sensitive source reaches an insensitive sink.  
Information flow analysis has no way to show that such a flow is 
security-relevant, but it is the analyst's job to further inspect the flows 
and decide. In this section, however, we have a different goal and setup: we 
start with a set of known security problems that produce a source-to-sink 
flow and proceed by showing what type of analysis is needed to detect these 
problems.


Our tool can enforce different 
security conditions (cf. Section~\ref{sec:secFram}). For example, if we are 
interested only in explicit and 
observable implicit flows, we can run the tool in observable tracking mode 
and enforce observable secrecy.
Figure~\ref{fig:diffLevels} presents the number of source-to-sink flows 
detected by different monitoring strategies.
All the integrity vulnerabilities can be detected by taint tracking only, 
and all the security violations in our data set can be detected through 
observable tracking. Moreover, all the Node.js vulnerabilities can be 
detected by the taint tracking only, independently of whether they are 
confidentiality or integrity vulnerabilities. We argue that this is because  
our Node.js programs are expected to be trusted. That is, a security issue  
may arise from a programming error, but not by malicious intention.  This 
assumption does not hold, however, for the fingerprinting and history 
sniffing benchmarks, where only observable implicit flows contribute to the 
source-to-sink flows. A second explanation for why the implicit flows are 
prevalent in the browser environment is that there are already a set of 
security mechanisms in the browser that prevent certain type of dangerous 
behavior. For example when fingerprinting the login state using images, an 
attacker cannot directly read the bytes of the image due to same origin 
policy, and hence it relies on measuring its width.

We analyzed in detail the additional source-to-sink flows detected by 
observable tracking for benchmarks 12, 26, 34, 43, and 44, and by PU for 
benchmark~34.  In all these cases the reported flows are false positives, 
since they do not allow an attacker to exploit the respective vulnerability. 
In Section~\ref{subsec:lcr}, we discuss in detail why these false positives 
occur when data is propagated through implicit flows.

Our results indicate that observable tracking is enough to tackle all the 
real-life security problems we consider and that taint tracking suffices for 
all the trusted code. We do not claim that there are no real-life security 
problems beyond observable secrecy, we just do not see any in our data set.  
Moreover, we believe that when strong controls are in place, attackers will 
be motivated to use more sophisticated attacks, possibly though the use of 
hidden implicit flows.  However, tracking these flows is expensive as we 
will see in the remainder of this section.

\subsection{Permissiveness}
\label{perm}
\begin{figure}
	\centering
	\epsfig{file=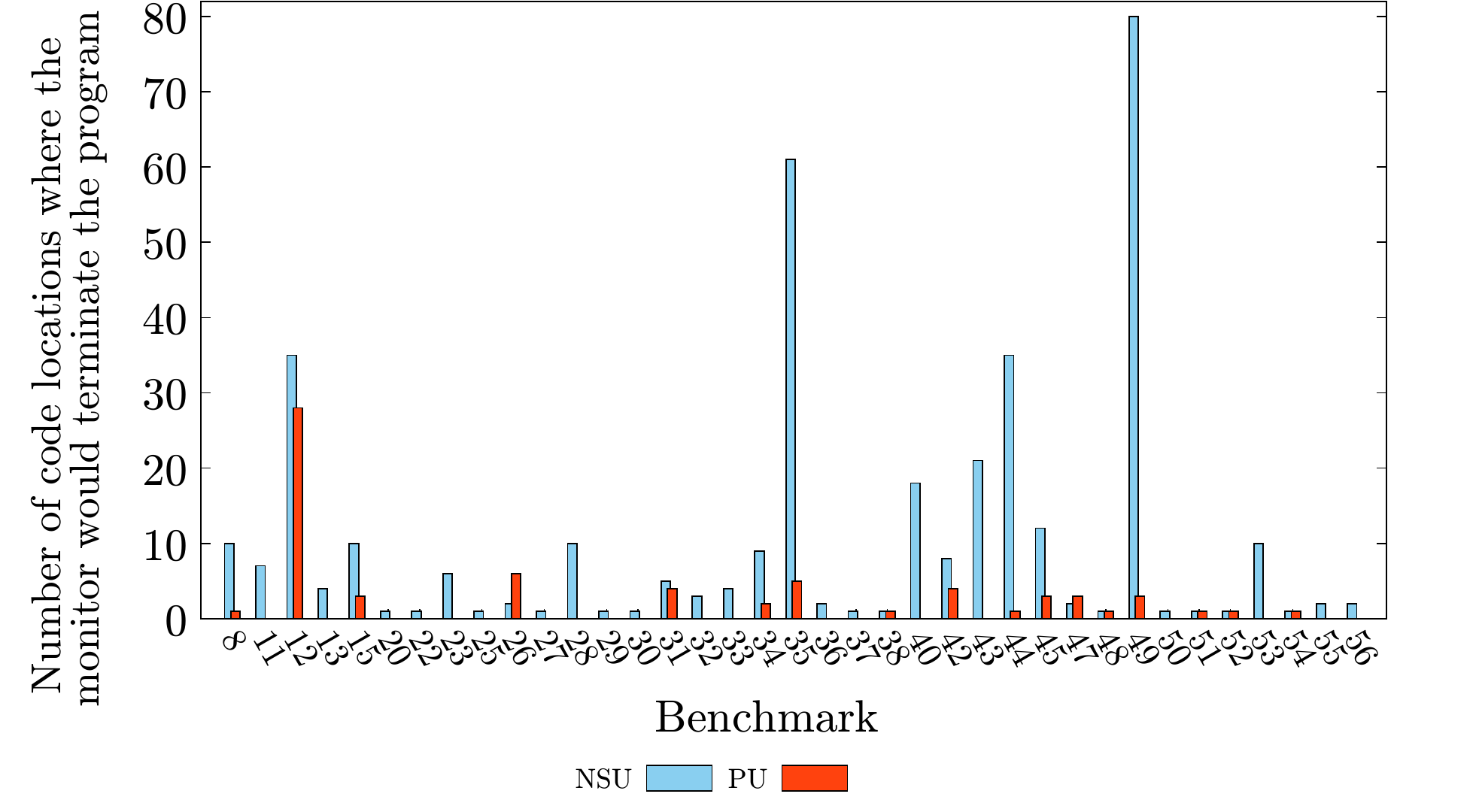, 
		width=0.5\textwidth}
	\caption{Number of violations, summarized by code locations, raised by NSU 
		and PU monitors.}
	\label{fig:permissiv}
\end{figure}

A potential problem for adopting information flow analysis in practice is 
its limited permissiveness, i.e., the fact that a monitor may terminate the 
program even though no data flows from a source to a sink.
Our metrics allow us to quantify this effect both for the NSU and the PU 
monitoring strategies.
Specifically, we measure how many code locations a user would have to 
inspect because a monitor terminates the program.
The NSU monitor terminates the program when an update of an insensitive 
variable is performed in a sensitive context.
This condition corresponds to observable implicit micro flows and we count 
the number of code locations where such a flow occurs.
The PU monitor terminates the program when an insensitive variable that was 
updated in a sensitive context is read.
This termination condition corresponds to the locations where our tool 
inserts an upgrade statement.
Figure~\ref{fig:permissiv} shows the number of code locations affected by 
the lack of permissiveness for NSU and PU.
We exclude benchmarks for which neither of the monitoring strategies raises 
an alarm.
On average, NSU throws \puVsNsuPermissiveness{} times more alarms than PU, 
that is, PU is much more practical than NSU.
However, when comparing the PU violations to the number of source-to-sink 
flows that require PU (Figure~\ref{fig:diffLevels}), we observe that most of 
the PU alarms do not translate to actual source-to-sink flows and should be 
considered false positives.

\subsection{Label Creep Ratio}
\label{subsec:lcr}

\begin{figure}
	\begin{subfigure}[b]{0.5\textwidth}
		\centering
		\epsfig{file=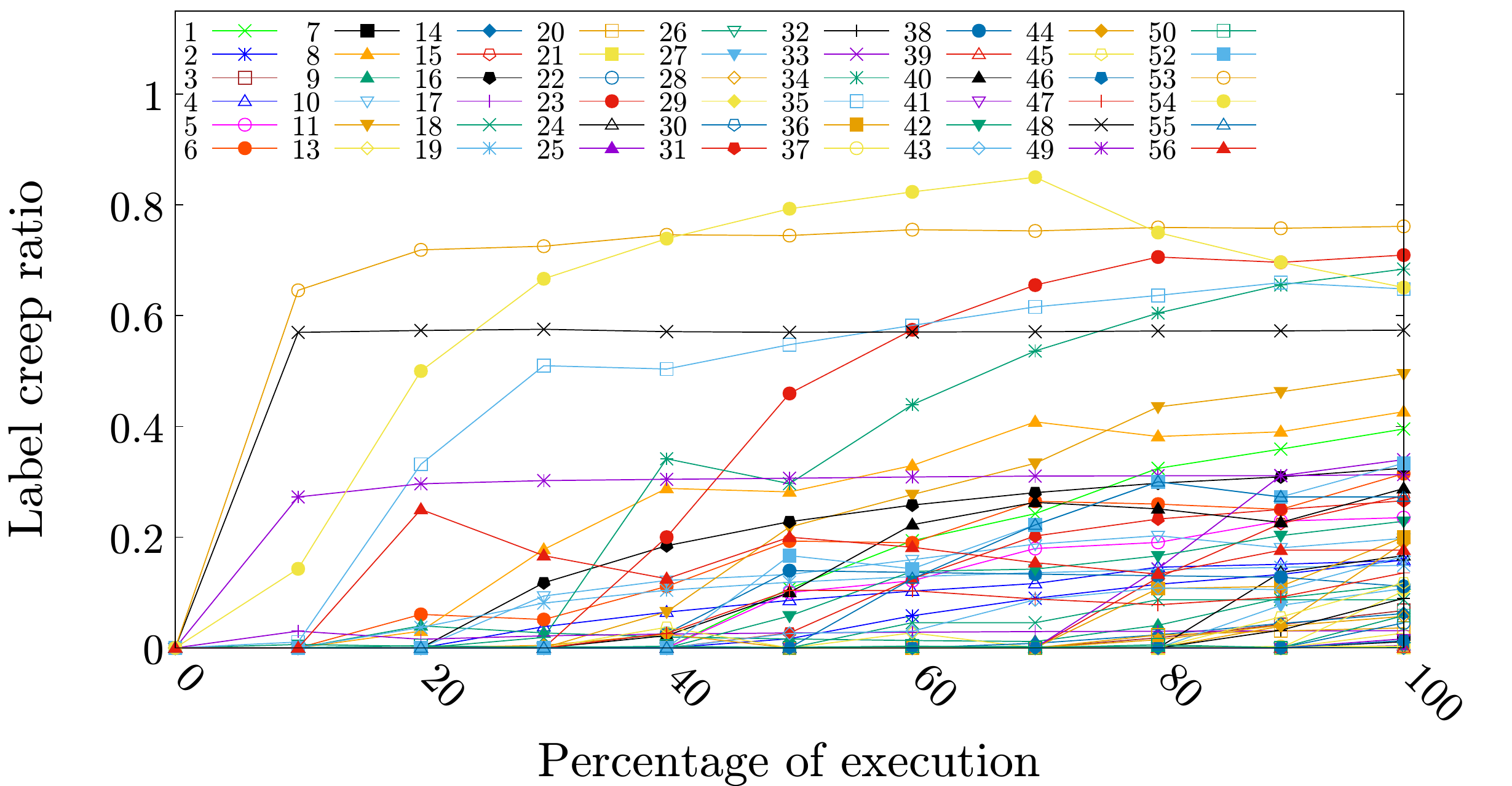, width=\textwidth}
		\caption{Each benchmark}
		\label{fig:labelCreep}
	\end{subfigure}
	\begin{subfigure}[b]{0.5\textwidth}
		\centering
		\epsfig{file=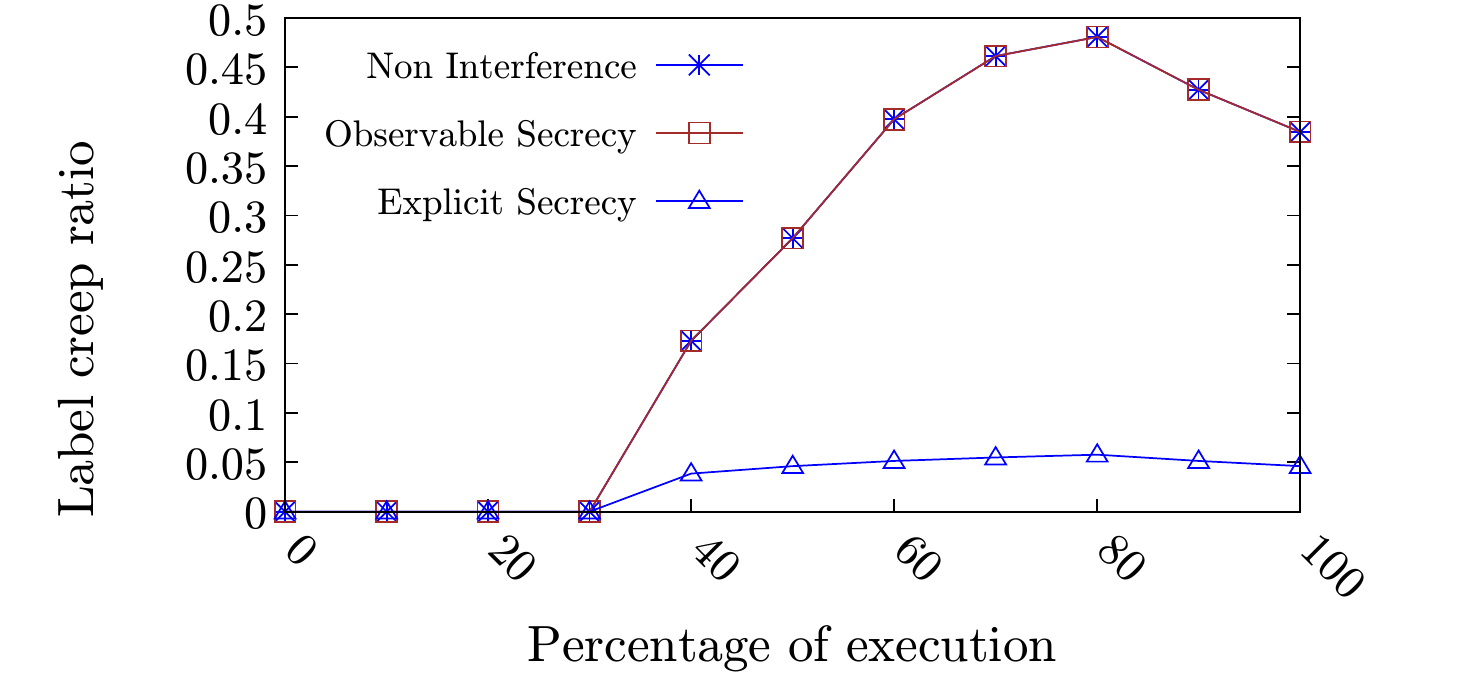, width=\textwidth}
		\caption{Benchmark 11 at various security modes}
		\label{fig:b11case}
	\end{subfigure}
	\vspace{-.2cm}
	\caption{LCR over execution time}
\end{figure}

As a second metric for the cost of different kinds of flows, we 
use the Label Creep Ratio (LCR) defined in 
Section~\ref{sec:metricsMicroFlows}.  For each benchmark and monitoring 
strategy, we measure how the LCR changes during the execution time.
Figure~\ref{fig:labelCreep} shows the ratio for PU monitoring.
The metric is not monotonically increasing because the analysis is 
flow-sensitive, i.e., the security label of a variable may change over time.
Nevertheless, the LCR steadily increases for most benchmarks, which 
confirms the label creep problem.
Because our policies are targeted at detecting known security problems 
in the benchmarks, the maximum LCR reached is relatively low (20\%, on 
average).

A comparison of different monitoring strategies shows that stricter 
monitoring causes more label creep.
On average, observable tracking has a \dropInLCNIOS{} smaller LCR than PU; a 
taint tracking analysis has a \dropInLCOSES{} smaller LCR 
than observable tracking.
Figure~\ref{fig:b11case} illustrates this effect with a representative 
benchmark (number~11).
The graph shows how label creep increases for observable tracking compared 
to taint tracking.

\begin{figure}
	\begin{lstlisting}
// query marked 'sensitive':
function parseQuery(query) {
  // query pushed on the stack:
  if(query instanceof Function) {       
    var nF = eval(query); // sink call
    return [new Part(null, '$', nF)];
  }
}
function Part(f, operator, operand, p){
  if(p === undefined) 
    p = [];                         // implicit
  this.field = f;                   // implicit 
  this.operator = operator;         // implicit
}
	\end{lstlisting}
	\vspace{-0.1cm}
	\caption{Implicit flows snippet from benchmark 11.}
	\label{fig:b11sc}
\end{figure}

We illustrate with the same benchmark~11 how label creep may translate to 
false positives.
By revisiting Figure~\ref{fig:diffLevels}, we observe that the implicit 
flows do not contribute additional source-to-sink violations compared to a 
taint analysis.
Figure~\ref{fig:b11sc} shows an excerpt of the source code of the 
benchmark.
The code is vulnerable to code injection, where \code{query} is the source 
and \code{eval} is the sink.
The source-to-sink flow is trivial since the sensitive data is directly 
passed to the sink at line~3, which a taint tracker easily detects.
In addition, observable tracking pushes the query on the security stack at 
line~2, which causes implicit flows at lines 10 and 11 where two constants 
are written to memory.
For detecting code injections, these flows are irrelevant.
For example, suppose we have a statement \code{eval(this.operator)} at 
line~12, for which observable tracking would report a source-to-sink flow.
This source-to-sink flow would be a false positive because the attacker can 
only control whether the call to \code{eval} happens, not what value flows into it.  

\subsection{Runtime Overhead}
\label{runtime}

\begin{table}[tb]
	\caption{Number of instrumented operations handling sensitive data for 
		different benchmarks and monitors.}
	\centering
	\setlength{\tabcolsep}{1pt}
	\footnotesize
	\renewcommand{\arraystretch}{0.9}
	\begin{tabular}{@{}l|rrr|rrr|rrr@{}}
		\toprule
		& \multicolumn{3}{c}{Explicit Secrecy} & \multicolumn{3}{c}{Observable 
			Secrecy} & 
		\multicolumn{3}{c}{Non Interference}\\
		\midrule
		& Min & Avg & Max & Min & Avg & Max & Min & Avg & Max\\
		\midrule
		Command injection &10 & 59,339 & 1,118,862 & 10 & 59,383 & 1,118,910& 10 & 
		59,540 & 1,118,941\\
		ReDoS vuln. &3 & 210 & 2,064 & 3 & 540 & 6,152& 3 & 633 & 7,073\\
		Buffer vuln. &98 & 5,740 & 24,690 & 98 & 6,007 & 24,748& 98 & 6,084 & 
		24,843\\
		Client-side progr. &4 & 5,919 & 40,364 & 14 & 19,555 & 134,765& 16 & 20,890 & 
		136,502\\
		
		\bottomrule
	\end{tabular}
	\label{fig:costOps}
\end{table}

The last cost metric we use is a proxy measure for the runtime overhead 
imposed by different monitors.
For each benchmark we count the number of operations that propagate a label 
or that modify the security stack.
Table~\ref{fig:costOps} shows how the number of events depends on the kind 
of monitor, aggregated by the different types of vulnerabilities we 
consider.
As expected, raising the security bar translates into larger runtime 
overhead.
Interestingly, this increase is not uniform across the different types of 
benchmarks.
For injections, the cost increase is relatively small, while for ReDoS and 
client-side programs the increase between explicit and observable secrecy is 
more than 2.5-fold.
We hypothesize that this is due to the structure of the programs: when 
comparing these findings with the micro flows in 
Figure~\ref{fig:prevalenceRealistic}, we see that implicit flows are more 
common in ReDoS and client-side programs than in injections.
The price paid to track implicit flows in the client-side benchmarks
translates to detected source-to-sink flows, as we have seen in 
Section~\ref{subsec:s2s}, while this is not the case for ReDoS 
vulnerabilities.

\subsection{Threats to Validity}
The validity of the conclusions drawn from our study is subject to several 
threats.
First, our dynamic information flow analysis uses a simple model for native 
functions (Section~\ref{sec:implementation}), which may not accurately 
capture all effects of these functions. To minimize the influence of this 
limitation, we focus the study on subject programs that have relatively few 
native calls. We also wrote a set of precise models for some of the array 
and string native functions.
Second, our results are limited to the programs we consider and may not 
generalize to other programs or classes of programs. In particular, we mostly 
consider non-malicious programs, where difficult-to-analyze flows may be 
less prevalent than in malicious code. Our methodology is generic enough to 
be easily applied to other programs.
Finally, the hidden implicit flows that our methodology can observe are 
bounded by the upgrade statements inserted into the programs, which in turn 
depend on the tests we use to exercise the programs. To mitigate this threat 
we constructed tests in a way that increases the sensitive branch coverage.  
However, multiple paths cannot be covered due to a variety of reasons, e.g.,  
error cases that cannot be easily triggered or unfeasible execution paths.
Despite these limitations, our study produces interesting insights about the 
kinds of flows that appear in real-world JavaScript programs and the 
cost-benefit tradeoff of information flow analysis.

\section{Related Work}
\label{sec:relatedwork}

\begin{table}
  \caption{JavaScript information flow analyses and the flows they 
  support: \checkmark = considers this flow, 
  \text{\sffamily -} = does not consider this flow,  
  \text{\sffamily NSU} and \text{\sffamily PU} = may
  abort due to NSU or PU checks, respectively,
  and  \text{\sffamily MOD} = may modify program behavior.}
    \footnotesize
  \setlength{\tabcolsep}{8pt}
  \centering
  \renewcommand{\arraystretch}{0.9}
    \begin{tabular}{@{}lllccc@{}}
	\toprule
  Work & Analysis & Explicit &  Obs. & Hidden \\
    \midrule
    Vogt et al.~\cite{DBLP:conf/ndss/VogtNJKKV07} & dynamic & \checkmark & \checkmark  & \text{\sffamily -}  \\
    Jang et al.~\cite{Jang2010} & hybrid & \checkmark & \checkmark  & \text{\sffamily -}  \\    
    Chugh et al.~\cite{chugh2009staged} & hybrid &  \checkmark & \checkmark  & \checkmark  \\
    Tripp et al.~\cite{Tripp:2014:HSA:2610384.2610385} & hybrid & \checkmark & \text{\sffamily -}  & \text{\sffamily -}  \\
    Chudnov \& Naumann~\cite{Chudnov:2015:IIF:2810103.2813684} & dynamic & \checkmark & \checkmark  & \text{\sffamily\footnotesize NSU}  \\
    Hedin et al.~\cite{hedin2014jsflow}& dynamic  &  \checkmark & \checkmark 
    & \text{\sffamily\footnotesize NSU}\\
    Bichhawat et al.~\cite{DBLP:conf/esorics/BichhawatRJGH17} & dynamic &  \checkmark & 
                                                                                        \checkmark & \text{\sffamily\footnotesize PU}\\
       Kerschbaumer et al.~\cite{DBLP:conf/isw/KerschbaumerHLB13} & dynamic &  \checkmark & 
                                                                                        \checkmark & -\\
    Bauer et al.~\cite{bauer2015run} & dynamic &  \checkmark & \text{\sffamily -} & \text{\sffamily -}\\
    De Groef et al.~\cite{DeGroef:2012:FWB:2382196.2382275} & dynamic & \text{\sffamily\footnotesize MOD} & \text{\sffamily\footnotesize MOD} & \text{\sffamily\footnotesize MOD}\\
    Austin \& Flanagan~\cite{Austin:2012:MFD:2103656.2103677} & dynamic & \text{\sffamily\footnotesize MOD} & \text{\sffamily\footnotesize MOD} & \text{\sffamily\footnotesize MOD}\\

	\bottomrule
  \end{tabular}
  \label{tab:recent-work}
\end{table}

%

Denning and Denning pioneered  the development and formal description of
static information flow analyses~\cite{denning1977certification,Denning1976}.
Fenton studies purely dynamic information flow monitors~\cite{DBLP:journals/cj/Fenton74}.
A huge body of work has been created during the years to refine Dennings' and Fenton's 
ideas and to adapt them to various programming languages.
Table~\ref{tab:recent-work} presents some of the more recent tools and 
shows what kinds of flows they consider.
Many analyses consider only explicit flows~\cite{DBLP:conf/sp/SchwartzAB10}.  Among the analyses that 
consider implicit flows, the majority stop or modify the program as soon as a hidden 
flow occurs.

\paragraph*{Information Flow Analysis for JavaScript}
Chugh et al.\ propose a static-dynamic analysis that reports flows from code 
given to \code{eval()} to sensitive locations, such as the location bar of a 
site~\cite{chugh2009staged}.
Austin and Flanagan address the problem of hidden implicit 
flows~\cite{austin2009efficient,austin2010permissive}, as discussed in 
detail in Section~\ref{sec:background}.
Hedin et al.\ propose a dynamic analysis that implements the NSU strategy 
for a subset of JavaScript~\cite{Hedin2012}. They develop
JSFlow, which supports the full JavaScript language, but it requires inserting 
upgrade statements manually~\cite{hedin2014jsflow}.
Birgisson et al.\ propose to  automatically 
insert upgrade statements~\cite{birgisson2012boosting} by iteratively 
executing tests under the NSU monitor. Their approach is implemented for a 
JavaScript-like language, whereas we support the full JavaScript language.  
Our monitor implements the PU strategy to insert upgrade statements,
which reduces the number of upgrade statements and increases permissiveness.  
Bichhawat et al.\ propose a variant of PU, where the program is terminated 
whenever a partially leaked value may flow into the 
heap~\cite{Bichhawat2014}.
A WebKit-based browser by Kerschbaumer et
al.~\cite{DBLP:conf/isw/KerschbaumerHLB13} balances performance and
permissiveness by probabilistically switching between taint tracking
and observable tracking and deploys crowdsourcing techniques to
discover information flow violations by Alexa Top 500 pages.

\paragraph*{Other Work on Information Flow Analysis}
Balliu et al.\ study a family of information flow trackers for different 
kinds of flows and propose security conditions to evaluate their 
soundness~\cite{DBLP:conf/esorics/BalliuSS17}.
We borrow their conditions to prove the soundness of our monitor for NanoJS.  
Bao et al.\ show that considering implicit flows can cause a significant 
amount of false positives and propose a criterion to determine a subset of 
all conditionals to consider~\cite{Bao2010}.
Chandra et al.\ propose a \mbox{VM-based} analysis for Java that combines a 
conservative static analysis with a dynamic analysis to track all three 
kinds of flows considered in this paper~\cite{Chandra2007}.
Dytan is a dynamic information flow analysis for binaries that supports both 
explicit and observable implicit flows~\cite{Clause2007}.
Myers and Liskov introduce Jif, a language for specifying 
and statically enforcing security policies for Java programs~\cite{MyersL00}.
A survey by Sabelfeld and Myers provides an overview of further static 
approaches~\cite{Sabelfeld2003}.

\paragraph*{Applications of Information Flow Analysis}
Information flow analysis is widely used to discover potential 
vulnerabilities. All approaches we are aware of consider only a subset of 
the three kinds of flows.
Flax uses taint analysis to find incomplete or missing input validation and 
generates attacks that try to exploit the potential 
vulnerabilities~\cite{Saxena2010a}.
Lekies et al.~\cite{Lekies2013} and Melicher et al.~\cite{domxss:ndss18}  propose a similar approach to detect DOM-based XSS 
vulnerabilities.
Jang et al.\ analyze various web sites with information flow policies 
targeted at common privacy leaks and attack vectors, such as cookie stealing 
and history sniffing~\cite{Jang2010}. Their analysis considers observable 
implicit flows but not hidden implicit flows.
Sabre analyzes flows inside browser extensions to discover malicious 
extensions~\cite{dhawan2009analyzing}. Their analysis considers only 
explicit flows.

\paragraph*{Studies of Information Flow}
King et al.~\cite{King2008} share our goal of understanding practical
trade-offs between explicit and implicit flows. They empirically study implicit flows detected by a 
static analysis in six Java-based implementations of authentication
and cryptographic functions.
They report that most of the reported policy violations are 
false positives, mostly due to conservative handling of exceptions.
Our work focuses on dynamic analysis for JavaScript-based
implementations, which gives rise to a class of observable secrecy
monitors that is not relevant in a static setting.
Another empirical study of information flows is by Masri et 
al.~\cite{masri2009measuring}. Their work studies how the length of flows 
(measured as the length of the static dependence chain), the strength of 
flows (measured based on entropy and correlations), and different kinds of 
information flows (explicit and observable implicit) relate to each other.
Similar to our methodology, Masri et al.\ target dynamic analysis.
Our work differs by addressing different research questions, a different 
language, and by considering hidden implicit flows.

\section{Conclusions}
\label{sec:conclusions}

This paper presents an empirical study of information flows in real-world 
programs.
Based on novel metrics to capture the prevalence of explicit, observable 
implicit, and hidden implicit flows, as well as the costs they involve, we 
study \numberapps{} JavaScript programs that suffer from real-world security 
problems.
Our results show that implicit flows are expensive to track in terms of 
permissiveness, label creep, and runtime overhead.
We find taint tracking to be sufficient for most of the studied 
security problems, while for some privacy scenarios observable
tracking is needed.
Our work helps security analysts and analysis developers to better 
understand the cost-benefits tradeoffs of information flow analysis.  
Furthermore, our findings highlight the need for future research on 
cost-effective ways to analyze hidden implicit information flows.

\section*{Acknowledgments}
Parts of this work was supported by the German Federal Ministry of
Education and Research and by the Hessian Ministry of Science and the
Arts with the National Research Center for Applied Cybersecurity, and by
the German Research Foundation within the ConcSys and Perf4JS projects.

\bibliographystyle{ACM-Reference-Format}
\bibliography{references}

\appendix
\section{Security Definitions}\label{app:security-definitions}

The previous section has formally defined the flow counting that is at the
heart of our empirical study.
We now related the flow counting to three previously
described~\cite{DBLP:conf/esorics/BalliuSS17} security conditions:
Explicit secrecy, which requires the absence of explicit flows,
observable secrecy, which requires the absence of both explicit
flows and observable hidden flows, and
non-interference, which requires the absence of all three kinds of flows,
i.e., explicit flows, observable implicit flows, and hidden implicit flows.
To describe these security conditions in our formalization,
we define an instrumented version of the semantics that, along with counting
flows, extracts another program.
Intuitively, the extracted program preserves the semantics of the original
program execution but exposes only a subset of all flows.

To formalize non-interference, we first refer to low-equivalence on environments
and heaps. Two environments and heaps are low-equivalent if they are equal on
all insensitive values. For example, when considering
integrity, the two states are equal on all non-attacker-controlled variables.
Dually, for
confidentiality, this
means that the attacker cannot observe any difference between the two states.
Non-interference is defined in terms of low-equivalence of initial environments
and heaps. Intuitively, an execution satisfies non-interference iff the same
trace can be produced for any indistinguishable starting environment and heap.

\begin{mydef}[Non-interference]
  A program $c$ satisfies non-interference for environment $\env_1$ and heap
  $h_1$, iff whenever $\conf{c, \env_1, h_1, [], \zeroFlow}$ $\step{\tau}^\star \conf{c',
    \env_1', h_1', t_1', \cnt_1'}$, then for all $\env_2$, $h_2$, where $(\env_1, h_1)
  =_L (\env_2, h_2)$, it holds that $\conf{c, \env_2, h_2, [], \zeroFlow}
  \step{\tau}^\star \mathit{cnf'}$ for some $\mathit{cnf'}$.
\end{mydef}

Explicit secrecy and observable secrecy are both defined by extracting a simpler
program during the execution of a program which eliminates information flows not
considered by the security condition: Programs extracted by explicit secrecy
contain no control-flow information, whereas programs extracted by observable
secrecy discard statements in untaken branches, thus removing leaks through not
executed statements in other branches in the program.

The following describes the program extraction formally.
We extend each configuration $\cnf$  with an extracted
statement $c_e$, written
$(\cnf, c_e)$, where $\cnf \in Conf$ and $c_e \in A$ with $A$ referring to
the set of extracted statements for a given security condition.
For each security condition, we below define an extraction extraction
function $E: \Conf \times A
\to A$ and then use $E$ in the execution steps of the instrumented
semantics: $(\cnf, c_e) \step{\tau}_E (\cnf, E(\cnf, c_e))$.

\textbf{Explicit secrecy: }  For explicit
secrecy, we disregard control-flow-related statements by defining
an extraction function $\expl$. Intuitively,
the extracted program discards all control-flow decisions that influenced the
current execution and extracts only the straight-line portion of the current
execution. As a result, the extracted program no longer contains any
implicit flows.
The extraction function for explicit secrecy is defined as follows:
{\small
\begin{align*}
  & \expl(\conf{\passign{x}{e}, \dots}, c_e) = \pseq{c_e}{\passign{x}{e}} \\
  & \expl(\conf{\psink{f}, \dots}, c_e) = \pseq{c_e}{\psink{f}} \\
  & \expl(\conf{\passignField{x}{y}{e}, \dots}, c_e) = \pseq{c_e}{\passignField{x}{y}{e}} \\
  & \expl(\conf{\pseq{c_1}{c_2}, \dots}, c_e) = \expl(\conf{c_1, \dots}, c_e) \\
  & \expl(\conf{\dots}, c_e) = c_e
\end{align*} }
Based on this extraction function, we can define explicit secrecy:
\begin{mydef}
  A program $c$ satisfies explicit secrecy for $\env$ and $h$ iff whenever
  $(\conf{c, \env, h, \cnt, S}, \pskip) \step{\tau}^\star_{\expl}
   (\conf{c', env', h', \cnt', S'}, c_e')$, then
 $c_e'$ is non-interfering for environment $\env$ and heap $h$.
\end{mydef}


{\newcommand{\cxt}{\mathit{cxt}}
\textbf{Observable secrecy: }
To define observable secrecy we first define evaluation contexts
to keep track of where in a partially extracted program
the next statement should be placed. The set $\Cxt$ of evaluation
contexts is defined by the following grammar:
\begin{mathpar}
\Cxt \mathrel{::=}
\begin{array}[t]{l}
                 \bullet
  \mathrel{|}    \pseq{\Stmt}{\Cxt}
  \mathrel{|}    \pif{e}{\Cxt}{\pskip}
  \mathrel{|}\\
                 \pif{e}{\pskip}{\Cxt}
\end{array} \\
\vspace{-.6cm}
\end{mathpar}
%
Note that with the exception of $\bullet$, symbolizing a hole
in the context, evaluation contexts are a subset of statements.
We denote replacing $\bullet$ by a statement or context $c$
in a context $\cxt$ by $\cxtInsert{\cxt}{c}$. Note that if $c \in \Stmt$,
then $\cxtInsert{\cxt}{c} \in \Stmt$.

We then define an extraction function $\obs : \Conf \times \Cxt \to \Cxt$
to define observable secrecy:
{\small
\begin{align*}
  &\obs(\conf{\passign{x}{e}, \dots}, \cxt) =
    \cxtInsert{\cxt}{\pseq{\passign{x}{e}}{\bullet}} \\
  &\obs(\conf{\passignField{x}{y}{e}, \dots}, \cxt) =
    \cxtInsert{\cxt}{\pseq{\passignField{x}{y}{e}}{\bullet}} \\
  &\obs(\conf{\psink{f}, \dots}, \cxt) =
   \cxtInsert{\cxt}{\pseq{\psink{f}}{\bullet}} \\
  &\obs(\conf{\pseq{c_1}{c_2}, \dots}, \cxt) =
   \obs(\conf{c_1, \dots}, \cxt) \\
  &\obs(\conf{\pif{e}{c_1}{c_2}, \env, h, \dots}, \cxt) = \\
  &\qquad\quad\begin{cases}
    \cxtInsert{\cxt}{\pif{e}{\bullet}{\pskip}} & \expeval{e}{\env, h} = \ptrue \\
    \cxtInsert{\cxt}{\pif{e}{\pskip}{\bullet}} & \text{otherwise}
  \end{cases} \\
  &\obs(\conf{\ppop, \dots}, \cxt) = \mathit{leaveBranch}(\cxt) \\
  &\obs(\conf{\dots}, \cxt) = \cxt
\end{align*}
}
where $\mathit{leaveBranch}$ denotes shifting the hole in the context outside
of the branch of the surrounding $\textbf{if}$. Note that in programs
not initially containing $\ppop$ statements, any $\ppop$ encountered during
execution delimits a control-flow construct.

}

\section{Soundness}\label{app:soundness}

In this section, we show that if a particular execution results in zero explicit
flows, this execution satisfies explicit secrecy. Similarly, if both observable
and explicit flow counts are zero, the run satisfies observable secrecy.

\begin{theorem}\label{thm:explicit-soundness}
  If $(\conf{c, \env, h, t, \cnt}, \pskip) \step{\tau}_{\expl}^\star
  (\conf{\terminated, \env', h', t', \cnt'}, c_e)$
  and $\forall (l, \cnt) \in \tau.\; \cnt(\eFlow) = 0$, then $c_e$
  satisfies explicit secrecy for $\env$ and $h$.
\end{theorem}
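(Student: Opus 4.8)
The plan is to reduce explicit secrecy of $c_e$ to plain non-interference of $c_e$, and then to prove non-interference by a low-equivalence argument governed entirely by the flow counts. First I would record the shape of the extracted program: by the defining cases of $\expl$, the statement $c_e$ is a sequential composition of exactly the assignments, field assignments, and sink commands executed along the run, in order, with every control-flow construct ($\pif{\cdot}{c_1}{c_2}$, $\pwhile{e}{c}$, $\ppop$) and every $\pupgrade{x}$ discarded. Hence $c_e$ is straight line: running it never pushes onto the security stack, so $\bigsqcup t = \zeroFlow$ throughout, and it contains no upgrades. Consequently, in \emph{any} run of $c_e$ both the observable and the hidden component of every flow count stay $0$, so a value is insensitive in a $c_e$-run iff its explicit component is $0$. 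Because $c_e$ is itself branch-free, its own extraction under $\step{}_{\expl}$ is $c_e$ again up to trivial $\pskip$ and associativity rewriting; thus ``$c_e$ satisfies explicit secrecy'' reduces to ``$c_e$ is non-interfering for $\env$ and $h$'' (the intermediate extracted prefixes then follow, each emitting a prefix of the trace).

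Next I would bridge the hypothesis, which speaks about the original run of $c$, to the run of $c_e$. The original run and the run of $c_e$ from $(\env,h)$ execute the same straight-line commands in the same order on the same underlying values; they differ only in flow counts, since the original run may assign inside sensitive contexts whereas the $c_e$-run keeps an empty stack. By induction over this common command sequence I would establish the one-directional invariant: for every location, positive explicit count in the $c_e$-run implies positive explicit count in the original run. The only interesting case is $\passign{x}{e}$: if $x$ acquires a positive explicit count in the $c_e$-run, then either the carried count $\cnt_e(\eFlow)$ is positive, or the increment $\Delta_e=1$ fired; but in a $c_e$-run a non-zero flow count means a non-zero explicit component, so $\Delta_e=1$ already forces $\cnt_e(\eFlow)>0$. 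Either way some subexpression of $e$ has positive explicit count in the $c_e$-run, hence, by the induction hypothesis, in the original run, and since flow counts are joined by pointwise addition through expression evaluation this positivity is carried into $\cnt_e(\eFlow)$ and thus into $x$ in the original run. The control-flow steps of the original run leave variable counts untouched, so they preserve the invariant trivially. Applying the contrapositive at each $\psink{e}$, and extending it through the reachability join $\bigsqcup(h,v)$ to the base values emitted by $\toVal(h,v)$, yields exactly what I need: since every $(l,\cnt)\in\tau$ has $\cnt(\eFlow)=0$, every base value reaching a sink in the $c_e$-run from $(\env,h)$ has explicit count $0$, hence flow count $\zeroFlow$, i.e.\ is insensitive.

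Finally I would prove non-interference of $c_e$ by comparing the run from $(\env,h)$ with the run from an arbitrary low-equivalent $(\env_2,h_2)$. Since $(\env,h) =_L (\env_2,h_2)$ assigns identical security labels (flow counts) to the sources and $c_e$ is branch-free, the two runs execute the identical command sequence and their flow counts, being value-independent, coincide step by step. I would then maintain the low-equivalence invariant: at every point the two runs agree on all locations whose common flow count is $\zeroFlow$. For $\passign{x}{e}$ this is preserved because if $\cnt_e=\zeroFlow$ then, counts being pointwise sums, every subexpression of $e$ is insensitive and therefore equal in the two states, so $x$ receives equal values; and if $\cnt_e\neq\zeroFlow$ then $x$ becomes sensitive and is unconstrained (field and array assignments are analogous, using low-equivalence of the heap). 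At each $\psink{e}$ the emitted base values have count $\zeroFlow$ in the $(\env,h)$-run by the bridge, hence $\zeroFlow$ in both runs, hence equal by the invariant, so the two runs emit identical trace elements; as $c_e$ is finite and straight line both runs terminate. This gives equal traces, i.e.\ non-interference of $c_e$, and therefore explicit secrecy.

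The main obstacle I expect is the bridge step, because the flow counts of the original run and of the extracted run genuinely disagree: assigning a constant inside a sensitive context produces an observable flow (and, downstream, even a spurious explicit increment) in the original run that is absent from the straight-line run, so one cannot simply ``replay'' counts. The argument must instead isolate the one-directional invariant on explicit \emph{positivity} and justify that it is preserved, which hinges precisely on the facts that expression evaluation adds counts and that the extracted run keeps the observable and hidden components identically zero. Pushing this invariant cleanly through the heap, i.e.\ through $\bigsqcup(h,\cdot)$, $\toVal$, aliasing, and field assignment, is the remaining source of routine but fiddly case analysis.
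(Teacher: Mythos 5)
Your overall architecture is the same as the paper's own proof: the paper also reduces explicit secrecy of the straight-line extracted program to per-run non-interference, proves by induction on the instrumented run a count-domination property relating the original run to runs of $c_e$ (its predicate $B$), pairs it with a sink-safety invariant quantified over low-equivalent initial states (its predicate $I_E$), and closes the argument by observing that straight-line programs preserve low-equivalence of environments and heaps. Your ``bridge'' plus ``count coincidence across low-equivalent runs'' is a repackaging of exactly these two ingredients, so at the level of strategy there is nothing genuinely different.

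There is, however, one concrete gap in your bridge lemma: you track only the explicit component, and both its inductive step and its use at sinks lean on the claim that in a run of $c_e$ a non-zero flow count forces a non-zero explicit component. You justify this by the absence of observable/hidden \emph{increments} (empty stack, no upgrades), but that does not constrain the \emph{initial} labels: the theorem quantifies over arbitrary $\env$ and $h$, so an initial value may carry a count such as $(0,1,0)$. In that case your invariant is false. Concretely, let $s$ hold $\ptrue$ with initial count $(0,1,0)$ and run $\pseq{\pif{s}{\passign{x}{0}}{\pskip}}{\passign{x}{s}}$. In the original run, $\passign{x}{0}$ executes in a sensitive context, so $x$ acquires count $(0,1,0)$ via $\Delta_o$; the subsequent $\passign{x}{s}$ then fires no explicit increment, because the overwritten count is already non-zero, and $x$ ends with explicit component $0$. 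In the run of the extracted program $\pseq{\passign{x}{0}}{\passign{x}{s}}$, the overwritten count is $\zeroFlow$, so $\Delta_e$ fires and $x$ ends with explicit component $1$: positive explicit count in the $c_e$-run, zero in the original run. (This does not contradict the theorem, since in this scenario a sink call on $x$ would also increment the original run's explicit sink counter, violating the hypothesis; it only breaks your intermediate lemma.) The paper's proof avoids the issue by stating domination on whole count tuples via $\flowsto$ --- if a location's count is $\zeroFlow$ in the original run, it is $\zeroFlow$ in the $c_e$-run --- which is preserved with no assumption on initial labels, because $\Delta_e$ firing in the $c_e$-run only requires the assigned expression's full count to be non-zero there, hence non-zero in the original run as well. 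Adopting that tuple-level formulation (or explicitly assuming that sensitive initial values carry positive explicit counts) repairs your argument; the remainder of your proof then goes through.
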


Proofs for the two theorems are provided in Appendix~\ref{app:proofs}.

\begin{theorem}\label{thm:observable-soundness}
  If $(\conf{c, \env, h, t_0, \cnt}, \bullet) \step{\tau}^\star
  (\conf{\terminated, \env', h', t', \cnt'}, c_e)$ and $\forall (l, \cnt)
  \in \tau.\; \cnt(\eFlow) = \cnt(\oFlow) = 0$, then $\cxtInsert{c_e}{\pskip}$
  satisfies observable secrecy for $\env$ and $h$.
\end{theorem}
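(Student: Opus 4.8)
The plan is to follow the blueprint of Theorem~\ref{thm:explicit-soundness}, replacing the straight-line extraction $\expl$ by the context-based extraction $\obs$ and strengthening the induction to account for control flow. First I would unfold the definition of observable secrecy on the target program $P := \cxtInsert{c_e}{\pskip}$. Since $c_e$ is already the $\obs$-extraction of the original run, every conditional in $P$ has its not-taken branch replaced by $\pskip$; re-running $P$ under $\obs$ from the same initial $\env, h$ evaluates each guard exactly as in the original run, so the re-extraction is idempotent and returns $P$ itself. Hence the goal reduces to showing that $P$ is non-interfering for $\env$ and $h$.

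To prove this I would set up a two-run simulation. Run $A$ executes $P$ from $(\env, h)$; an adequacy lemma (by induction on the original derivation) shows that $A$ reproduces the trace $\tau$ and carries the same sink flow counts as the original instrumented run, so the hypothesis $\cnt(\eFlow)=\cnt(\oFlow)=0$ transfers to every output of $A$. Run $B$ executes $P$ from an arbitrary $(\env_2,h_2)$ with $(\env,h) =_L (\env_2,h_2)$. The heart of the argument is an invariant $\Phi$ maintained by induction on the steps of $A$: (i) every variable and heap field whose $A$-value has zero explicit-and-observable count is low-equal to its $B$-counterpart; (ii) whenever the $A$-stack is $\zeroFlow$ the two runs are in lockstep on the same statement; and (iii) any location where the runs disagree carries a nonzero explicit or observable count in $A$. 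Monotonicity of count propagation through expressions and the heap (the join $\bigsqcup$ and the pointwise $+$ only increase counts, and $\toVal$/$\bigsqcup(h,\cdot)$ respect this) then guarantees that a disagreeing, nonzero-count value can never be the source of a zero-count value.

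The delicate step, and the one I expect to be the main obstacle, is the conditional case $\pif{e}{c_1}{\pskip}$. When $e$ has zero count, clause (i) forces $B$ to take the same branch as $A$ and the runs stay synchronized. When $e$ has nonzero count, $B$ may take the $\pskip$ branch while $A$ executes $c_1$, and I must show this divergence is contained: every location written by $c_1$ either already disagreed or acquires a nonzero count. This is exactly what the $\Delta_o$ component records, since a write to a previously zero-count location inside the sensitive context pushed by E-If bumps the observable count, restoring clause (iii) when the branches re-merge at the matching $\ppop$. The genuinely subtle sub-case is an assignment that overwrites an already-sensitive location in a sensitive context, which the counting rule $\Delta$ treats specially; the invariant must be phrased so such locations stay on the "may disagree" side, and making this bookkeeping survive the $\ppop$/$\mathit{leaveBranch}$ step (together with the analogous reachability argument for aliasing through references returned by $\toVal$) is the crux of the whole proof.

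Finally, I would discharge the conclusion at sink statements: by the hypothesis the emitted value has $\cnt(\eFlow)=\cnt(\oFlow)=0$, so clause (i) of $\Phi$ gives that $A$ and $B$ emit low-equal values, while clause (ii) and the synchronized-guard argument give that $B$ reaches the same sinks in the same order. Therefore $B$ produces the trace $\tau$, which is precisely non-interference of $P$ for $\env$ and $h$, completing the reduction established in the first step.
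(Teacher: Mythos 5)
Your proposal matches the paper's proof in both strategy and substance: the paper likewise proves a strengthened statement amounting to a two-run simulation (its parts $(1)$--$(3)$ correspond to your adequacy lemma and clauses (i)--(iii) of $\Phi$, via the relations $=_W$, $\le$, and the stack relation $\preceq$), proceeds by induction over the execution with the same key cases (\textsc{E-Assign}, \textsc{E-If}, \textsc{E-Sink}, \textsc{E-Pop}), and handles branch divergence exactly as you describe --- either lockstep when the guard has zero count, or containment of the divergent run inside the sensitive context until the matching $\ppop$, with sinks discharged by the zero explicit/observable count hypothesis. No gaps; this is essentially the paper's argument.
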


We omit a similar soundness statement for non-interference, as the
monitor follows the same approach as a traditional permissive-upgrade-based
information flow monitor, under the additional assumption that all required
upgrade statements were inserted during the testing phase.

\section{Proofs and Additional Definitions}\label{app:proofs}
{\renewcommand{\baselinestretch}{0.98}
We formally define values and objects as follows.
The sets are defined by (mutual-)inductively:
{\small
\begin{mathpar}
  \inferrule{n \in \Base \\
    \cnt \in \Cnt
  }
  {n^{\cnt} \in \Value}
  \and
  \inferrule{
    a \in \Addr \\
    \cnt \in \Cnt
  }{a^{\cnt} \in \Value}
  \and
  \inferrule
  {
    V \subset_{\text{fin}} \Name \\
    f \in (V \to \Value)
  }
  {
    f \in \Obj
  }
\end{mathpar}
}
The formal definition of joining the labels of values is given
by $\bigsqcup(h, n^{(\ell, \cnt)}) = \ell$ if  $n \in \Base$
and $\bigsqcup(h, a) = \bigsqcup_{x \in \dom(h(a))}{(\bigsqcup(h, h(x)))}$ if $a \in \Addr$.
The helper function $\toVal(h, v) : \{ \Base \times \Name^\star \}$ is
defined as follows:
$\toVal(h, b) = \{ (b, []) \}$ if $b \in \Base$, and $\toVal(h, r) = \{ (b',
p.[x]) | (b', p) \in \toVal(h, h(x)), x \in \dom(h) \}$

The remaining rules for the operational semantics are the following:
{\small
  \begin{mathparpagebreakable}
  \inferrule[E-Skip]
  {\ }
  {\conf{\pskip, \env, h, t, \cnt} \step{} \conf{\terminated, \env, h, t,
      \cnt}}
  \and
 \inferrule[E-AssignField]
 {
   \expeval{x}{\env, h} = a_x \\
   a_x \in \Addr \\
   o = h(a_x) \\
   y \in \dom(o) \\
   \expeval{x.y}{\env, h} = v_{xy} \\
   \cnt_{xy} = \bigsqcup(h, v_{xy}) \\
   \expeval{e}{\env, h} = v^{\cnt} \\
   \cnt' = \cnt + \Delta(\cnt_{xy}, \cnt, t) \\
   o' = o[y \mapsto v^{\cnt'}] \\
   h' = h[a_x \mapsto o']
 }
 {
   \conf{\passignField{x}{y}{e}, \env, h, t, \cnt} \step{}
   \conf{\terminated, \env, h', t, \cnt}
 }
 \and
  \inferrule[E-While]
  {\ }
  { \conf{\pwhile{e}{c}, \env, h, t, \cnt} \step{}\\
    \conf{\pif{e}{\pseq{c}{\pwhile{e}{c}}}{\terminated}, \env, h, t, \cnt}
  }
  \and
  \inferrule[E-UpgradeH]
  { \expeval{x}{\env, h} = v^{\cnt} \\
    \cnt \neq \zeroFlow }
  {\conf{\pupgrade{x}, \env, h, t, \cnt} \step{}
    \conf{\terminated, \env, h, t, \cnt}}
  \and
  \inferrule[E-Seq]{\conf{c_1, \env, h, t, \cnt} \step{\tau} \conf{c_1', \env',
  h', t', \cnt'} }
  {\conf{\pseq{c_1}{c_2}, \env, h, t, \cnt} \step{\tau} \conf{\pseq{c_1'}{c_2},
  \env', h',
      t', \cnt'}}
  \and
  \inferrule[E-SeqEmpty]{\ }
  { \conf{\pseq{\terminated}{c}, \env, h, t, \cnt} \step{}
    \conf{c, \env, h, t, \cnt}}
\end{mathparpagebreakable}
}


Two environments and heaps $(\env_1, h_1)$ and $(\env_2, h_2)$ are
low-equivalent, written $\env_1 =_L \env_2$ iff

  $\forall x.\; \cnt(\env_1(x)) =
\cnt(\env_2(x)) \land (\cnt(\env_1(x)) = \zeroFlow \Rightarrow \toVal(h,
\env_1(x)) = \toVal(h, \env_2(x)))$, $\dom(h_1) = \dom(h_2)$,
$\forall r \in \dom(h_1).\; \cnt(h_1(r)) = \cnt(h_2(r)) \land
 (\cnt(h_1(r)) = \zeroFlow \Rightarrow \toVal(h_1, h_1(r)) = \toVal(h_2,
 h_2(r))$.

{
In the interest of brevity, we elide lemmas about standard properties of the
evaluation relation in the following proofs.

\begin{proof}[Proof of Theorem~\ref{thm:explicit-soundness}]
We define $\cnt_0 = (0, 0, 0)$.
  We define an safety property $I_E(\env_0, h_0) \subseteq \Stmt$ on extracted programs as
  follows: $c \in I_E(\env_0, h_0) :\Leftrightarrow (\forall \env, h.\;
  (\env, h) =_L (\env_0, h_0)
  \land \conf{c, \env, h, [], \cnt_0}
  \step{\tau}^\star \conf{c', \env', h', t', \cnt'} \land (c' = \psink{e} \lor c' =
  \pseq{\psink{e}}{c_2}) \Rightarrow \bigsqcup(e, h) = \zeroFlow)$. Moreover,
  we note that only straight-line programs are extracted for explicit secrecy
  and such programs trivially preserve low equivalence of environments and
  heaps.

  Additionally, we define the predicate $B(\env_0, h_0, \env_0', h_0') \subseteq
  \Stmt$ where $c_e \in B$ iff whenever
  $(\env_0, h_0) \le (\env, h)$, $(\env_0, h_0) =_L (\env, h)$, and
  $\conf{c_e, \env, h, [], \cnt_0} \step{\tau}^\star \conf{\terminated,
    \env', h', t', \cnt'}$, then $(\env', h') \le (env_0', h_0')$, where
  $(\env, h) \le (\env_0, h_0)$ iff the counter of each value in $\env$ and $h$
  is related by $\flowsto$ to the corresponding counter in $\env_0$ and $h_0$.
  Formally, $(\env_1, h_1) \le (\env_2, h_2)$ iff
  $\forall x\ v_1\ \cnt_1\ v_2\ \cnt_2.\; \env_1(x) = v_1^{\cnt_1} \land
  \env_2(x) = v_2^{\cnt_2} \Rightarrow \cnt_1 \flowsto \cnt_2$,
  $\dom(h_1) = \dom(h_2)$, and
  $\forall a\ v_1\ \cnt_1\ v_2\ \cnt_2.\; h_1(a) = v_1^{\cnt_1} \land
  h_2(a) = v_2^{\cnt_2} \Rightarrow \cnt_1 \flowsto \cnt_2$.

  For the induction to succeed we show the stronger statement
  that whenever $(\conf{\mathlist{c, \env_0, h_0, [], \cnt_0}}, \pskip)
  \step{\tau_0}^\star (\conf{\mathlist{c', \env_0', h_0', [], \cnt_0'}}, c_e')$, then
  $c_e' \in I_E(\env_0, h_0)$ and $c_e \in B(\env_0, h_0, \env_0', h_0')$.

  We prove this by induction on $(\conf{\mathlist{c, \env_0, h_0, [], \cnt_0}}, \pskip)
  \step{\tau_0}^\star (\conf{\mathlist{c', \env_0', h_0', [], \cnt_0'}}, c_e')$. The
  reflexive case is trivial. For the transitive case, assume
  $(\conf{c, \env_0, h_0, [], \cnt_0}, \pskip) \step{\tau_0}
   (\conf{c', \env_0', h_0', t', \cnt_0'}, c_e) \step{\tau_0'}
   (\conf{c'', \env_0'', h_0'', t'', \cnt_0''}, c_e')$. Per the induction
    hypothesis we have that $c_e \in I_E(\env_0, h_0)$ and $c_e \in B(\env_0,
    h_0, \env_0', h_0')$. We show
  that $c_e' \in I_E(\env_0, h_0)$ and $c_e' \in B(\env_0, h_0, \env_0'', h_0'')$
  by induction on $\conf{\mathlist{c', \env_0', h_0', t', \cnt_0'}} \step{\tau_0'}
  \conf{\mathlist{c'', \env_0'', h_0'', t'', \cnt_0''}}$.
  The main interesting
  cases are \textsc{E-Assign}, \textsc{E-AssignField}, and \textsc{E-Sink}.

  Case \textsc{E-Assign}:
  We have that $c_e' = \pseq{c_e}{\passign{x}{e}}$.
  We have that $c_e' \in I_E(\env_0, h_0)$ follows from the fact that
  any sink in $c_e'$ is also reachable in $c_e$, hence the claim follows
  from the induction hypothesis.

  To show $c_e' \in B(\env_0, h_0, \env_0'', h_0'')$, we note that
  if $\conf{c_e', \env, h, [], \cnt_0} \step{\tau_0}^\star \conf{\terminated,
    env'', h'', t'', \cnt''}$ then $\conf{c_e', \env, h, [], \cnt_0}
  \step{\tau_0}^\star \conf{\passign{x}{e}, \env', h', t', \cnt'}$ and
  $\env'' = \env'(x := v_e)$, where $v_e$ is the result of the assigned
  expression with incremented counters. We show that the label
  of $x$ is still bounded by the corresponding label of $x$ in $\env_0''$.
  From the induction hypothesis we have that $(\env', h') \le (\env_0',
  h_0')$. If the new label of $x$ in $(\env_0', h_0')$ is not $\zeroFlow$, then
  the claim follows trivially. If it is low, it follows from the previous
  fact that $x$ also receives $\zeroFlow$ in $\env'', h''$.

  Case \textsc{E-AssignField} is analogous.

  Case \textsc{E-Sink}:
  In this case $c_e' \in B(\env_0, h_0, \env_0'', h_0'')$ follows easily
  from the induction hypothesis as the sink statement does not change the
  environment and heap. $c_e' \in I_E(\env_0, h_0)$  follows trivially
  for sink statements already reachable in $c_e$. Since $c_e' =
  \pseq{c_e}{\psink{e}}$, we need to show that this also holds when
  reaching $c_e'$. Since the explicit flow count is still $0$ in $\cnt_0''$,
  we have that this the label of $e$ in $(\env_0', h_0')$ is $\zeroFlow$.
  Since $(\env', h') \le (\env_0', h_0')$, we have that therefore
  the label of $e$ in $\env', h'$ is also $\zeroFlow$, as desired.

  Clearly whenever $c \in I_E(\env_0, h_0)$, then $c$ satisfies per-run
  non-interference wrt. $\env_0$ and $h_0$ since low equivalence between
  memories is preserved and only low expression reach sinks without increasing
  the counter.
\end{proof}
}

In the following proof sketch, we define $\pi_{1,2}((a, b, c)) = (a, b)$ and
overload $\zeroFlow$ to be any $n$-tuple of $0$s. The $\flowsto$ relation
is generalized similarly to arbitrary tuples.

\begin{proof}[Proof of Theorem~\ref{thm:observable-soundness}]
  For the induction to go through, we show the stronger property that whenever
  $(\conf{c, \env, h, [], \cnt_0}, \bullet) \step{\tau}^\star (\conf{c', \env',
    h', t', \cnt'}, c_e)$, $(\env, h) =_L (\env_2, h_2)$, and $\pi_{1,2}(\cnt')
  = \zeroFlow$, then:
  \begin{itemize}
  \item $\conf{c_e, \env, h, [], \cnt_0} \step{\tau}^\star
    \conf{\pseq{\bullet}{\ppop^{\length{t'}}}, \env', h', t', \cnt_1'}$, and
    $\pi_{1,2}(\cnt_1') = \pi_{1, 2}(\cnt')$
    ($1$) and
  \item There exist $\env_2'$, $h_2'$, $t_2'$, $\cnt_2'$ and $S_2'$ such
    that $(\conf{\mathlist{c_e, \env_2, h_2, [], \cnt_0} \step{\tau}^\star
    \conf{\pseq{\bullet}{\ppop^{\length{t'}}}, \env_2', h_2', t_2', \cnt_2'}}
    \land
    \length{t'} = \length{t_2'}) \lor
    (\conf{c_e, \env_2, h_2, [], \cnt_0} \step{\tau}^\star
     \conf{\terminated, h_2', t_2', \cnt_2'} \land \bigsqcup{t'} =
     \High)$ $(2)$.
  \item Moreover, $(\env', h') =_W (\env_2', h_2')$ $(\env_2', h_2') \le
    (\env', h')$, and $t_2' \preceq t'$. $(3)$
  \end{itemize}

  where $\env_1 =_W \env_2$ holds iff all values labeled low in both
  environments are equal in value and lists of levels $xs$ and $ys$
  satisfy $xs \preceq ys$ if there exists a suffix of length $\length{xs}$
  of $ys$ such that all levels are pairwise related by $\flowsto$. We write
  $c^n$ for $n$ copies of $c$ composed with sequential composition.

  We show this by induction on $(\conf{c, \env, h, [], \cnt_0}, \bullet)
  \step{\tau}^\star
  (\conf{\mathlist{c', \env', h', t', \cnt'}}, c_e)$. The reflexive case is trivial.
  For the transitive case, assume $(\conf{c, \env, h, [], \cnt_0}, \bullet)
  \step{\tau}^\star (\conf{\mathlist{c', \env', h', t', \cnt'}}, c_e) \step{\tau'}
  (\conf{\mathlist{c'', \env'', h'', t'', \cnt''}}, c_e')$. From part $(1)$ of the
  induction hypothesis we have that $\conf{c_e, \env, h, [], \cnt}
  \step{\tau}^\star \conf{\pseq{\bullet}{\ppop^{\length{t'}}}, \env', h', t',
    \cnt'}$ $(1')$; from part $(2)$ we obtain $\env_2', h_2', t_2', \cnt_2'$ for which the corresponding disjunction also holds $(2')$; from $(3)$ we
have $(\env', h') =_W (\env_2', h_2')$, $(\env_2', h_2') \le (\env', h')$, and
$t_2' \preceq t'$. We show $(1-3)$ by induction on $(\conf{c', \env',
  h', t', \cnt'}, c_e) \step{\tau'} (\conf{c'', \env'', h'', t'', \cnt''},
c_e')$; the interesting cases are \textsc{E-Assign}, \textsc{E-AssignField},
\textsc{E-IfTrue/False}, \textsc{E-Sink}, and \textsc{E-Pop}. We refer to
the proof obligations as $(1''')$, $(2''')$, and $(3''')$.

Case \textsc{E-Assign}:
  $(1'')$ follows trivially from the semantics. We have $c_e' =
  \cxtInsert{c_e}{\pseq{\passign{x}{e}}{\bullet}}$, $\env'' = \env'[x \mapsto
  v_e]$, where $v_e$ is the result of evaluating $e$ in $\env', h'$ and
  incrementing the counts appropriately.

  We proceed by case distinction on $(2')$. If $\conf{c_e, \env_2, h_2, [],
    \cnt_0} \step{\tau}^\star \conf{\pseq{\bullet}{\ppop^{\length{t_2'}}}, \env_2',
    h_2', t_2', \cnt_2'}$, then we have that $\conf{\mathlist{c_e', \env_2, h_2, [],
    \cnt_0}} \step{\tau}
  \conf{\pseq{\pseq{\passign{x}{e}}{\bullet}}{\ppop^{\length{t_2'}}}, \env_2',
    h_2', t_2', \cnt_2'} \step{[]} \conf{\pseq{\bullet}{\ppop^{\length{t_2'}}},
    \env_2'', h_2', t_2', \cnt_2'}$ where $\env_2'' = \env_2'[x \mapsto v_e^2]$,
  where $v_e^2$ is the result of evaluating $e$ in $\env_2', h_2'$.
  $\length{t_2'} = \length{t'}$ follows from this case in $(2')$ and the fact
  that assignments do not modify the label stack.

  In the case where $\conf{c_e', \env_2, h_2, [], \cnt_0} \step{\tau}^\star
  \conf{\terminated, \env_2', h_2', t_2', \cnt_2'}$, then we have that
  $\bigsqcup(t') \neq \zeroFlow$. Therefore, we have that the counter of
  $\env''(x)$ is not $\zeroFlow$, therefore $(\env_2'', h_2') =_W (\env'', h'')$
  and $(\env_2'', h_2') \le (\env'', h'')$ follows trivially.

  The other statements follow from the induction hypothesis: If $\bullet$ is not
  reached, then $\cxtInsert{c_e}{c_2}$ matches the evaluation of $c_e$ for any $c_2$.

  Case \textsc{E-AssignField}: Analogous to \textsc{E-Assign}.

  Case \textsc{E-If}: Without loss of generality, we only
  discuss the case where the \textbf{then} branch is taken. In this case, we first
  note that $\env'' = \env'$, $h'' = h'$ and $t'' = \cnt_e . t'$ where
  $\expeval{e}{\env', h'} = v_e^{\cnt_e}$; in particular note that
  $\length{t''} = \length{t'} + 1$. For $(1'')$ note that
  from $(1')$ it follows that $\conf{c_e', \env, h, [], \cnt_0}
  \step{\tau}^\star \conf{\pseq{\pif{e}{\bullet}{\ppop^{\length{t'}}}}, \env',
    h', t', \cnt'} \step{[]} \conf{\pseq{\bullet}{\ppop^{\length{t' + 1}}},
    \env', h', t'', \cnt'}$ as desired.

  We proceed by case distinction on $(2')$. If $\conf{c_e, \env_2, h_2, [],
    \cnt_0} \step{\tau}^\star \conf{\pseq{\bullet}{\ppop^{\length{t_2'}}},
    \env_2', h_2', t_2', \cnt_2'}$ and $\length{t_2'} = \length{t'}$, then we
  have that $\conf{\mathlist{c_e', \env_2, h_2, [], \cnt_0}} \step{\tau}^\star
  \conf{\mathlist{\pseq{\pif{e}{\bullet}{\pskip}}{\ppop^{\length{t'}}}, \env_2', h_2',
    t_2', \cnt_2'}}$. Note that for $\expeval{e}{\env_2', h_2'} = v_2^{\cnt_e^2}$
  and $(\env_2', h_2') \le (\env', h')$ we have that $\cnt_e^2 \flowsto \cnt_e$.

  To show $(2'')$ we proceed by case distinction on $v_2 = \ptrue$. If $v_2 =
  \ptrue$, we have that
  $\conf{\pseq{\pif{e}{\bullet}{\pskip}}{\ppop^{\length{t'}}}, \env_2', h_2',
    t_2'', \cnt_2'} \step{[]} \conf{\pseq{\bullet}{\ppop^{\length{t'} + 1}},
    \env_2', h_2', t_2'', \cnt_2'}$ where $t_2'' = \cnt_e^2 . t_2'$. Since
  $\cnt_e^2 \flowsto \cnt_e$, we have that $t_2'' \preceq t''$; we also have
  that $\length{t_2''} = \length{t''}$ trivially.

  In the case where $v_2 \neq \ptrue$, we have that
  $\conf{\mathlist{\pseq{\pif{e}{\bullet}{\pskip}}{\ppop^{\length{t'}}}}, \env_2', h_2',
    t_2'', \cnt_2'} \step{[]} \conf{\mathlist{\pseq{\pskip}{\ppop^{\length{t'} + 1}},
    \env_2', h_2', t_2'', \cnt_2'}} \step{[]}^\star \conf{\terminated, \env_2',
    h_2', t_2''', \cnt_2'}$ where $t_2'''$ is a prefix of $t_2''$ and hence,
  $t_2''' \preceq t''$.

  Note that since $v_2 \neq v_e$ and $(\env_2', h_2') =_W (\env', h')$, we have
  that $\cnt'' \neq \zeroFlow$ as required for this case.

  where $v_e^2$ is the result of evaluating $e$ in $\env_2', h_2'$.
  $\length{t_2'} = \length{t'}$ follows from this case in $(2')$ and the fact
  that assignments do not modify the label stack.

  In the case where $\conf{c_e, \env_2, h_2, [], \cnt_0} \step{\tau}^\star
  \conf{\terminated, \env_2', h_2', t_2', \cnt_2'}$ and
  $\bigsqcup(t') \neq \zeroFlow$, we have
  that then $\conf{c_e', \env_2, h_2, [], \cnt_0} \step{\tau}^{\star}
  \conf{\terminated, \env_2', h_2', t_2', \cnt_2'}$ since $\bullet$ is
  not reached and hence replacing it with $\pif{e}{\bullet}{\pskip}$
  does not affect the execution. Since $t_2' \preceq t'$, we trivially have
  that then also $t_2' \preceq \cnt_e . t'$.
  In both cases, the rest of $(3'')$ follows trivially.

  Case \textsc{E-Sink}: We have that $c_e' =
  \cxtInsert{c_e}{\pseq{\psink{e}}{\bullet}}$,
  $\tau' = [v]$ where $v^{\cnt_v} =
  \expeval{e}{\env', h'}$. Environments and the heap are unchanged. Moreover,
  we have that $(\mathit{fst}(\cnt''), \mathit{snd}(\cnt'')) =
  (\mathit{fst}(\cnt'), \mathit{snd}(\cnt'))$, since the flow counts
  are assumed to be $0$. $(1''')$ follows easily.

  For $(2'')$, note the second alternative of the disjunction of $(2')$ leads
  to a contradiction, since then $t'' \neq \zeroFlow$ and this would imply that
  $\pi_{1,2}(\cnt'') \neq \zeroFlow$, violating the assumption that
  $\pi_{1,2}(\cnt'') = \zeroFlow$.

  We can therefore assume that
  $\conf{\mathlist{c_e, \env_2, h_2, [],
    \cnt_0}} \step{\tau}^\star\allowbreak \conf{\mathlist{\pseq{\bullet}{\ppop^{\length{t_2'}}}, \env_2',
    h_2', t_2', \cnt_2'}}$ and $\length{t_2'} = \length{t'}$.
  Hence we also have that
  $\conf{\mathlist{c_e', \env_2, h_2, [],
    \cnt_0}} \step{\tau}^\star \conf{\mathlist{\pseq{\pseq{\psink{e}}{\bullet}}{\ppop^{\length{t_2'}}}, \env_2',
    h_2', t_2', \cnt_2'}} \step{v_2} \conf{\mathlist{\pseq{\bullet}{\ppop^{\length{t_2'}}},
    \env_2, h_2, t_2, \cnt_2'}}$. Since $\pi_{1, 2}(\cnt'') = \zeroFlow$, we also
  have that the label of $v$ is $\zeroFlow$; from $(\env_2', h_2') \le
  (\env', h')$ we have that $v_2$ is also labeled $\zeroFlow$. With
  $(\env', h') =_W (\env_2', h_2')$ this yields that $v = v_2$, concluding
  $(2'')$.
  $(3'')$ follows easily since heaps, environments, and the label stack are not
  modified by executing a sink statement.

Case \textsc{E-Pop}: By this case we have $c_e' = \leaveBranch{c_e}$.
$(1'')$ follows easily.
For $(2'')$ we again proceed by case distinction on the disjunction
in $(2')$. In the first case, the conclusion follows easily, since we
reach the same state as the execution in $(\env', h')$.
Assume now that $\conf{c_e, \env_2, h_2, [], \cnt_0} \step{\tau}^\star
\conf{\terminated, \env_2', h_2', t_2', \cnt_2'}$ and $\bigsqcup{t'} \neq
\zeroFlow$. We proceed by case distinction on this execution reaching
the branch surrounding $\bullet$ in $c_e$. We denote this branch by
$\pif{e}{c_1}{c_2}$; WLOG assume that $c_2 = \pskip$ and
$c_1 = \pseq{c_1'}{\bullet}$. Then, we show $\conf{c_e', \env_2, h_2,
  [], \cnt_0} \step{\tau}^\star \conf{\pseq{\bullet}{\ppop^{\length{t''}}},
  \env_2', h_2', \mathit{tl}(t_2'), \cnt_2'}$: Since
$c_e$ terminated without reaching $\bullet$, we have that $\expeval{e}{\env_2', h_2'}
\neq \ptrue^{\cnt_2}$ for any $\cnt_2$, and, since $c_2 = \pskip$, we
reach $\pseq{\bullet}{\ppop^{\length{t''}}}$ trivially, concluding the
induction.
  This stronger property then trivially implies non-interference of $\cxtInsert{c_e}{\pskip}$.
\end{proof}
}

\end{document}